\documentclass[11pt,reqno]{amsart}

\usepackage{lineno,hyperref,amsmath,amssymb,amsthm}
\usepackage{mathrsfs,graphicx,caption,subcaption}
\usepackage{amsfonts,enumerate}
\usepackage{cases,comment}

\usepackage{xcolor}
\modulolinenumbers[5]
\providecommand{\keywords}[1]
{\small\textbf{Keywords:} #1
}

\newtheorem{lemma}{Lemma}[section]

\newtheorem{proposition}{Proposition}[section]
\newtheorem{remark}{Remark}[section]

\newcommand\eps{\varepsilon}
\newcommand{\Section}[1]{\section{#1}\setcounter{equation}{0}}

\definecolor{lightgreen}{rgb}{.7,95,.65}

\newcommand{\R}{\mathbb{R}}
\newcommand{\Z}{\mathbb{Z}}

\newcommand{\overbar}[1]{\mkern 1.5mu\overline{\mkern-1.5mu#1\mkern-1.5mu}\mkern 1.5mu}









\bibliographystyle{elsarticle-num}

\author[Amadori]{Debora Amadori}
\address{Dipartimento di Ingegneria e Scienze dell'Informazione e Matematica (DISIM), University of L'Aquila -- L'Aquila, Italy}
\email{debora.amadori@univaq.it}
\author[Colangeli]{Matteo Colangeli}
\address{Dipartimento di Ingegneria e Scienze dell'Informazione e Matematica (DISIM), University of L'Aquila -- L'Aquila, Italy}
\email{matteo.colangeli1@univaq.it}
\author[Correa]{Astrid Correa}
\address{Dipartimento di Ingegneria e Scienze dell'Informazione e Matematica (DISIM), University of L'Aquila -- L'Aquila, Italy}
\email{astridherminia.correaluces@graduate.univaq.it}
\author[Rondoni]{Lamberto Rondoni}
\address{
Dipartimento di Scienze Matematiche, Politecnico di Torino, Corso Duca degli Abruzzi 24, 10129 Torino, Italy \\
and INFN, Sezione di Torino, Via P. Giuria 1, 10125 Torino, Italy\\
ORCID ID: 0000-0002-4223-6279}
\email{lamberto.rondoni@polito.it}

\begin{document}

\title[Exact Response and Kuramoto dynamics]{Exact Response Theory and Kuramoto dynamics}
\maketitle
\centerline{\date}






\begin{abstract}
The dynamics of Kuramoto oscillators is investigated in terms of the exact response theory based on the Dissipation Function,
which has been introduced in the field of nonequilibrium molecular dynamics. While linear response theory is a cornerstone of nonequilibrium statistical mechanics, it does not apply, 
in general, to systems undergoing phase transitions. Indeed, even a small perturbation may in that case result 
in a large modification of the state. An exact theory is instead expected to handle such situations. 
The Kuramoto dynamics, which undergoes synchronization transitions, is thus investigated analytically and numerically as a testbed for the exact theory mentioned above. 
A comparison between the two approaches shows how the linear theory fails, while the exact theory yields the correct response.
\end{abstract}

\keywords{ \textbf{Keywords:}
Exact Response theory, Kuramoto dynamics, collective behavior,  synchronization.}\\


\section{Introduction}
The response of a system with many degrees of freedom to an external stimulus is a central topic in nonequilibrium statistical mechanics. 
Its investigation has greatly progressed with the works of Callen, Green, Kubo, and Onsager, in particular, who contributed 
to the development of linear response theory \cite{Kubo,MPRV}.
In the '90s, the derivation of the Fluctuation Relations \cite{ECM,ES94,GC} provided the framework for a more general response theory, 
applicable to both Hamiltonian as well as dissipative deterministic particle systems \cite{MPRV,Caruso20,CL14,CR12,CRV12,dalron,ESW,Ruelle}. 
The study of response in stochastic processes, with a special focus on diffusion and Markov jump processes, has also been inspired by fluctuation 
relations, and has been studied {\em e.g.}\ in \cite{Agar,Maes,BDL,CMW11,Derrida}. 
Moreover, the role of causality, expressed by the Kramers-Kronig relations, 
in nonlinear extensions of the linear response theory has been discussed in \cite{CL12}.

The introduction of the Dissipation Function, first made explicit in \cite{ESAdvPhys}, and  developed as the observable of interest in 
Fluctuation Relations \cite{ESR2005,SRE2007}, paved the way to  an exact response theory. 
A theory expected to hold in presence of arbitrarily large perturbations and modifications of states, which allows the study
of the relaxation of particle systems to equilibrium or non-equilibrium steady states.

In this work we present and apply the Dissipation Function formalism to the Kuramoto model \cite{FirstKura,kuramoto1984chemical},   
which is considered a prototype of 
many particle systems exhibiting \textit{synchronization}, a phenomenon  familiar in many physical and biological contexts
\cite{fell2011role,Glass2001,jiruska2013synchronization,motter2013spontaneous,singer1999neuronal,STROGATZ20001}. 
Furthermore, the Kuramoto model provides the stage for a large research endeavor, in applied mathematics, control theory and statistical physics \cite{STROGATZ20001,RevModPhys.77.137,ARENAS200893,BALMFORTH200021,DORFLER20141539,gupta2018statistical,pikovsky2001}. See \cite{ha2016collective,DF-2018} for recent reviews on the subject.

In this paper, our aim is two-fold. On the one hand, we probe the exact response theory on a dissipative system with many degrees of freedom 
undergoing nonequilibrium phase transitions, which is in fact a challenging open problem.
On the other hand, while a vast mathematical literature exists on the Kuramoto model, it is interesting to analyze it from a new 
statistical mechanical perspective, in which some known results are reinterpreted, 
cf.\ {\em e.g.}\ Refs.\cite{BCM-CMS2015,DX-CMS2013}.

Our conclusion is that, while the linear response theory cannot characterize the Kuramoto synchronization process, the exact theory does. 
In particular, we obtain synchronization within the formalism of the Dissipation Function, thus showing how such a behaviour is captured by the
exact response theory, while it is not evidenced by the linear theory. Synchronization corresponds indeed to the maximum value of the
Dissipation Function, which we prove is attained in time. When the number of oscillators $N$ is large, this maximum value is proportional to the
oscillators coupling constant $K$.

This paper is organized as follows.
In Sec.~\ref{sec:sec1} we review some basic properties of the Kuramoto dynamics.
In Sec.~\ref{sec:sec2} we illustrate the main ingredients of the Dissipation Function response theory. 
In Sec.~\ref{sec:sec3} we study the response theory for the Kuramoto dynamics of identical oscillators. 
In Sec.~\ref{sec:sec4} we review the linear response theory, and we compare it with the exact response formalism. We draw our conclusions in Sec.~\ref{sec:sec5}.

\section{The Kuramoto system}
\label{sec:sec1}
The Kuramoto dynamics is defined on the $N$-dimensional torus, 
$\mathcal{T}^N = (\R/ (2\pi\Z))^N$, with $N \ge 1$, by the following set of coupled first order ODEs, for the phases $\theta_i(t)$:
\begin{equation}
\dot{\theta}_i=\omega_i+\frac{K}{N}\sum_{j=1}^N \sin(\theta_j-\theta_i) \qquad i=1,\dots,N  \label{kura1}
\end{equation}
where $K>0$ is a 
constant, and the natural frequencies $\omega_i\in \mathbb{R}$ are drawn from some given distribution $g(\omega)$.
The $N$ oscillators are represented by points rotating on the unit circle centered at the origin of the complex plane,
more precisely by $e^{i\theta_j}$ with $j=1,\ldots,N$. By introducing the polar coordinates of the barycenter,
\begin{equation}
R e^{i\Phi}=\frac{1}{N}\sum_{j=1}^N e^{i\theta_j} \label{order}
\end{equation}
with $R\in [0,1]$ and $\Phi\in \R$ (defined if $R>0$), one can rewrite Eq.(\ref{kura1}) as follows:
\begin{equation}
\dot{\theta}_i=\omega_i+K R \sin(\Phi-\theta_i) \, , \qquad i=1,\dots,N  \label{kura2}
\end{equation}
where $R=R(\theta(t))$ 
is the \textit{order parameter} and $\Phi=\Phi(\theta(t))$
the \textit{collective phase}, 
with $\theta=(\theta_1,...,\theta_N) \in \mathcal{M} = \mathcal{T}^N$, and $\mathcal M$ the phase space. 
The Kuramoto dynamics (\ref{kura2}) can also be written as a gradient flow:
\begin{equation}
\dot{\theta}=-\nabla f(\theta) \label{grad}
\end{equation}
with potential 
\begin{equation}
f(\theta)=-\sum_{i=1}^N \omega_i \theta_i +\frac{K}{2N}\sum_{i,j=1}^N \Big( 1-\cos(\theta_j-\theta_i) \Big) \, .
\end{equation}
that is analytic in $\theta$.

\vskip 5pt
\noindent
{\bf Identities for the order parameter.}\quad
{\em 
Equation \eqref{order} implies the following identities:
\begin{align}
R &= \frac 1 N \sum_{i=1}^N \cos(\Phi-\theta_i)\,, \label{ident-1}\\
0 &= \frac 1 N \sum_{i=1}^N \sin(\Phi-\theta_i)\,, \label{ident-1-bis}\\
R \sin(\Phi-\theta_i) & = \frac 1 N\sum_{j=1}^N \sin(\theta_j-\theta_i)\,,\qquad i=1,\ldots,N \label{ident-2bis}\\
R \cos(\Phi-\theta_i) & = \frac 1 N\sum_{j=1}^N \cos(\theta_j-\theta_i)
\,,\qquad i=1,\ldots,N\,. \label{ident-2}
\end{align}
Equations \eqref{ident-1} and \eqref{ident-2}, further imply:
\begin{equation}
R^2=\frac{1}{N^2}\sum_{i , j=1 }^N\cos(\theta_j-\theta_i) \,.
\label{R2}
\end{equation}
}

\vskip 5pt

\noindent
A \textit{complete frequency synchronization} occurs as $t\to+\infty$,  when  the differences
$\theta_i(t) - \theta_j(t)$ tend to a constant for all $i$ and $j$, and $R(\theta(t))$ tends to $R^\infty \in (0,1]$. 
Moreover, $R^\infty=1$ implies that all the $N$ terms of the sum in \eqref{ident-1} coincide with $\Phi$. In this case, the Kuramoto system undergoes a \textit{phase synchronization}.

\vskip 5pt
\noindent
For $\mathbf{\theta} \in \mathcal M$, 
we can rewrite 
Eq.(\ref{kura2}) as:
\begin{equation}
 \dot{\mathbf{\theta}}= W +
 V(\mathbf{\theta}) = V_K (\theta)
 \label{vfield}   
\end{equation}
where $W = \left( \omega_1 ,  \dots , \omega_N \right)$ is interpreted as an {\em equilibrium} vector field made of $N$ natural frequencies, 
while $V$ represents
a {\em nonequilibrium} vector perturbation with components:
\begin{equation}
V_i(\theta)= \frac{K}{N}\sum_{j=1}^N \sin(\theta_j-\theta_i) = 
K R \sin(\Phi-\theta_i) \, , \qquad i = 1, \dots , N\,.
\label{V-i}
\end{equation}
For later use, we prove the following identity.
\begin{lemma}\label{lem:2}
The divergence of the Kuramoto vector field $V_K$ of Eq.\eqref{vfield}, {\em i.e.}\ the associated phase space volumes variation rate $\Lambda$,
satisfies: 
\begin{equation}
\Lambda :=
\text{\rm div}_\theta V =  
K \left( 1 - N R^2 \right)
\,. \label{div}
\end{equation}
\begin{proof}
By means of \eqref{V-i}, for $i=1,\ldots,N$ one has
\begin{align*}
    \partial_{\theta_i} V_i 
    &= \frac{K}{N}  \partial_{\theta_i}\left( \sum_{i\not=j=1}^N \sin(\theta_j-\theta_i)\right)  \\
    &= - \frac{K}{N} \left( \sum_{i\not=j=1}^N \cos(\theta_j-\theta_i)\right) = - \frac{K}{N} \left( \sum_{j=1}^N \cos(\theta_j-\theta_i) -1 \right)\\
    &=  - {K} R \cos(\Phi-\theta_i)    + \frac{K}{N}    
\end{align*}
where we used \eqref{ident-2}. Summing  over $i$, and using \eqref{ident-1}\,, Eq.\eqref{div}  follows.
\end{proof}
\end{lemma}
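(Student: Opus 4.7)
The plan is to compute $\Lambda = \sum_{i=1}^N \partial_{\theta_i} V_i$ component by component, using the Cartesian form of $V_i$ in \eqref{V-i} (the expression with the explicit double sum), and then to convert the resulting expression into the polar form involving $R$ and $\Phi$ via the identities \eqref{ident-1}--\eqref{ident-2}.

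First I would fix $i$ and differentiate $V_i = \frac{K}{N}\sum_{j=1}^N \sin(\theta_j - \theta_i)$ with respect to $\theta_i$. Since the $j=i$ summand vanishes identically ($\sin 0=0$), the sum effectively runs over $j\neq i$, and differentiation in $\theta_i$ produces $-\cos(\theta_j-\theta_i)$ for each of those terms. Re-extending the sum to all $j$ introduces a correction $+\frac{K}{N}$ (coming from the missing $j=i$ term, where $\cos 0 = 1$). This yields
\[
\partial_{\theta_i} V_i \;=\; -\frac{K}{N}\sum_{j=1}^N \cos(\theta_j-\theta_i) \;+\; \frac{K}{N}\,,
\]
which by the polar identity \eqref{ident-2} equals $-KR\cos(\Phi-\theta_i) + \frac{K}{N}$.

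Summing over $i=1,\dots,N$, the constant contribution gives $K$, while the remaining term is $-K\sum_{i=1}^N R\cos(\Phi-\theta_i)$. Using \eqref{ident-1}, $\sum_i \cos(\Phi-\theta_i) = NR$, so this contribution simplifies to $-KNR^2$. Combining gives $\Lambda = K(1-NR^2)$, as claimed.

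No real obstacle arises here: the computation is a direct unfolding of the definition, and the only subtlety is the bookkeeping in extending the sum from $j\neq i$ to all $j$ so that the polar identities can be applied cleanly. The key conceptual point, which is implicit in the proof and worth flagging, is that the divergence couples only to the order parameter $R$ (not to $\Phi$), reflecting the rotational invariance of the Kuramoto coupling.
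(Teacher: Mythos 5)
Your proof is correct and follows essentially the same route as the paper's: differentiate each $V_i$ in its Cartesian form, handle the vanishing $j=i$ term, convert via the identity \eqref{ident-2}, and sum over $i$ using \eqref{ident-1}. No meaningful differences to report.
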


\par\noindent
Therefore, the Kuramoto dynamics do not preserve the phase space volumes, and $\Lambda$ actually varies in time, 
since $R$ is a function of the dynamical variables $\theta(t)$.

\Section{Mathematical framework of Response theory}
\label{sec:sec2}
Let us summarize the mathematical framework of the exact response theory originally derived in Ref.\cite{ESW}, 
and further developed in {\em e.g.}\ Refs.\cite{Caruso20,ESW,SRE2007,Typic,Jepps16}. 
The starting point is a flow $S^t: \mathcal{M}\rightarrow \mathcal{M}$, with phase space $\mathcal{M} \subset \mathbb{R}^N$, $N\ge1$, 
that is usually determined by an ODE system 
\begin{equation}
\dot{\theta} = V(\theta) \, , \quad 
\theta \in \mathcal{M}
\label{ODEgen}
\end{equation}
with $V$ a vector field on $\mathcal M$. Let $S^t \theta$ denote the solution at time $t\in \R$, with initial condition $\theta$, of such ODEs. 
The second ingredient is a probability measure $d \mu_0(\theta) = f_0(\theta) d \theta$ on  $\mathcal{M}$, with positive and continuously
differentiable density $f_0$. 
A time evolution is induced on the simplex of probabilities on $\mathcal M$, defining the probability at a time $t\in\R$ as:
\begin{equation*}
    \mu_t(E) = \mu_0(S^{-t}E)
\end{equation*} 
for each measurable set $E\subset \mathcal M$. 
This amounts to consider probability in a phase space like the mass of a fluid in real space. 
The corresponding continuity equation for the probability densities is the (generalized)
Liouville equation:
\begin{equation}
\frac{\partial f}{\partial t} +\text{div}_\theta(f V) =0 \, .
\label{liouv}
\end{equation}
Denoting by $f_t$ the solution of Eq.\eqref{liouv} with initial datum $f_0$, we can write $d \mu_t = f_t d \theta$.  
Letting $\Lambda = \mbox{div}_\theta \, V$ be the phase space volumes variation rate, and
introducing the {\em Dissipation Function} $\Omega^{f,V}$ \cite{SRE2007,Jepps16}:
\begin{equation}
\Omega^{f,V}(\theta):=-\Lambda(\theta) - V(\theta)\cdot \nabla \log f(\theta) \, , ~~ 
\nabla = \left( \partial_{\theta_1} , ...  \, , \partial_{\theta_N} \right) \, 
\label{Omega}
\end{equation}
the Euler version
of the Liouville equation \eqref{liouv} may be written as:
\begin{equation}
\frac{\partial f}{\partial t}= f ~\Omega^{f,V} \, .
\label{dyn}
\end{equation}
which can also be cast in the Lagrangian form:
\begin{equation}
\frac{d f}{d t}= - f \, \Lambda \, , 
\label{dyn0}
\end{equation}
with $\frac{d}{d t} = \frac\partial{\partial t} + V \cdot \nabla_\theta$ the total derivative along the flow \eqref{ODEgen}.

Direct integration of Eq.(\ref{dyn0}) yields
\begin{equation}
    f_{s+t}(S^t \theta)=\exp\{-\Lambda_{0,t}(\theta)\} f_s(\theta) \, ,
    \quad \forall\, t,\, s\ge 0 \label{dyn3}
\end{equation}
where we used the notation
\begin{equation}
    \mathcal{O}_{s,t} (\theta) :=
    \int_s^t \, \mathcal{O} (S^\tau \theta) d \tau
    \label{notat}
\end{equation}
for the phase functions, or \textit{observables}, $\mathcal{O}:\mathcal{M}\rightarrow \mathbb{R}$, so that, in particular, 
${\Lambda}_{0,t}(\theta)= \int_{0}^t \Lambda(S^{\tau}\theta) d\tau$. 

In the following Proposition, this notation is used with the observable $\mathcal{O}= \Omega^{f,V}$, so that the time integral in \eqref{notat} will correspondingly be denoted by $\Omega^{f,V}_{s,t}$.

\begin{proposition}\label{sec3:prop1}
For all $t$, $s\in\R$, the following identity holds:
\begin{equation}
    f_{s+t}(\theta)=
    \exp\Big\{\Omega^{f_s,V}_{-t,0}(\theta)\Big\}f_s(\theta)\,. \label{ft}
 \end{equation}
\end{proposition}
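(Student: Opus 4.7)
The plan is to use the Lagrangian identity \eqref{dyn3}, namely $f_{s+t}(S^t\theta)=e^{-\Lambda_{0,t}(\theta)}f_s(\theta)$, as the starting point and convert it into the Eulerian identity \eqref{ft} by pushing back along the flow and trading the $\Lambda$ line integral for an $\Omega_V^{f_s}$ line integral via the defining relation \eqref{Omega}. Concretely, first apply \eqref{dyn3} with $\theta$ replaced by its pre-image $S^{-t}\theta$; since $S^tS^{-t}\theta=\theta$, this yields
\begin{equation*}
f_{s+t}(\theta)=e^{-\Lambda_{0,t}(S^{-t}\theta)}\,f_s(S^{-t}\theta).
\end{equation*}

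Next I would rewrite both factors as integrals along the orbit of $\theta$ over $[-t,0]$. For the exponent, the change of variable $\sigma=\tau-t$ in the definition \eqref{notat} gives $\Lambda_{0,t}(S^{-t}\theta)=\int_{-t}^0 \Lambda(S^\sigma\theta)\,d\sigma=\Lambda_{-t,0}(\theta)$. For the density factor, I would invoke the chain rule along the flow: since $\tfrac{d}{d\sigma}\log f_s(S^\sigma\theta)=V(S^\sigma\theta)\cdot \nabla\log f_s(S^\sigma\theta)$, integrating from $-t$ to $0$ gives
\begin{equation*}
\log f_s(\theta)-\log f_s(S^{-t}\theta)=\int_{-t}^0 V(S^\sigma\theta)\cdot \nabla\log f_s(S^\sigma\theta)\,d\sigma.
\end{equation*}
Substituting both and collecting, the exponent becomes
\begin{equation*}
\int_{-t}^0\bigl[-\Lambda(S^\sigma\theta)-V(S^\sigma\theta)\cdot\nabla\log f_s(S^\sigma\theta)\bigr]d\sigma,
\end{equation*}
which by \eqref{Omega} equals $\int_{-t}^0 \Omega_V^{f_s}(S^\sigma\theta)\,d\sigma=\Omega^{f_s,V}_{-t,0}(\theta)$, giving \eqref{ft}.

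I do not anticipate a genuine obstacle here: the identity is essentially a reformulation of \eqref{dyn0} in Eulerian form, pulled back along the flow. The one point needing care is to keep the two time labels distinct — $s$ merely selects the density $f_s$ that is transported (so $f_s$ itself is held fixed and only its spatial argument is advected by $S^\sigma$), while $s+t$ enters only through the initial application of \eqref{dyn3}. Also, although \eqref{dyn3} is stated for $t,s\ge 0$, the same derivation from the Lagrangian ODE \eqref{dyn0} is valid for all $t,s\in \R$ whenever the flow is defined on $\R$, which holds for the smooth Kuramoto field $V_K$ on the compact torus $\mathcal M$.
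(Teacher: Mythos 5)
Your proposal is correct and follows essentially the same route as the paper: both start from the Lagrangian identity \eqref{dyn3} and convert the density ratio into the $V\cdot\nabla\log f_s$ line integral via the chain rule along the flow (the paper's Eq.~\eqref{du}, your $\tfrac{d}{d\sigma}\log f_s(S^\sigma\theta)$ step), so that the exponent reassembles into $\Omega^{f_s,V}_{-t,0}(\theta)$. The only difference is bookkeeping — you evaluate \eqref{dyn3} at the pre-image $S^{-t}\theta$ and integrate directly over $[-t,0]$, whereas the paper first proves the intermediate identity \eqref{omevol} on $[0,s]$ and then shifts — and your remark on extending \eqref{dyn3} to all real times is a sensible (and needed) observation.
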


\begin{proof} We start by claiming that 
\begin{equation}
\Omega^{f_t,V}_{0,s}(\theta)=
    \log \frac{f_t(\theta)}{f_t(S^{s} \theta)}-\Lambda_{0,s}(\theta)\,. \label{omevol}
\end{equation}
Indeed, one has:
\begin{equation}
 V(S^u \theta)\cdot \nabla \log f_t (S^u \theta)=\frac{d}{du}\log f_t (S^u \theta) \label{du}
\end{equation}
 because $t$ is fixed and $f_t$ does not depend explicitly on $u$, hence Eqs.(\ref{Omega}) and (\ref{du}) imply:
 \begin{align*}
 \Omega^{f_t,V}_{0,s}(\theta)&=
   - \int_0^s \big[ \Lambda(S^u\theta)+ V\cdot \nabla \log f_t(S^u\theta) \big]  du \\
  &= -\Lambda_{0,s}(\theta)-\int_0^s \frac{d}{du}\log f_t (S^u \theta) d\theta=-\Lambda_{0,s}(\theta)-\log \frac{f_t(S^{s} \theta)}{f_t(\theta)}
 \end{align*}
which leads to Eq.(\ref{omevol}). Next, Eqs.(\ref{dyn3}) and Eq.(\ref{omevol}) yield
\begin{equation}
\exp\Big\{\Omega^{f_s,V}_{s,s+t}(\theta)\Big\} f_{s}(S^{s+t} \theta)=     
     \exp\Big\{-\Lambda_{s,s+t}(\theta)\Big\}f_{s}(S^{s} \theta)=f_{s+t}(S^{s+t} \theta) \label{omevol2}
 \end{equation}
 which produces \eqref{ft}.
\end{proof}

As a consequence of Proposition~\ref{sec3:prop1}, 
a probability density $f$ is {\it invariant} under the dynamics 
if and only if $\Omega^{f,V}$ identically vanishes:
\begin{equation}\label{steady}
\Omega^{f,V}(\theta) = 0\, , \quad \forall~ \theta \in \mathcal{M}\,.
\end{equation}
In the sequel, we shall use the notation
\begin{equation}
\langle \mathcal{O}\rangle_t :=\int_{\mathcal{M}}  \mathcal{O}(\theta) f_t(\theta) d\theta \label{aver}
\end{equation} 
to denote the average of an observable with respect to the probability measure $\mu_t = f_t \, d \theta$. 
The exact response theory based on the Dissipation Function states that the average $\langle \mathcal{O}\rangle_t$ can be expressed in terms of the known initial density $f_0$, as in linear response theory. The difference between the two theories lies in the correlation functions that must be integrated in time. 
\begin{lemma} {\bf (Exact response):}
\label{lem:1}
Given $\{ S^t \}_{t \in \mathbb{R}}$ and an integrable observable $\mathcal{O} : \mathcal{M} \to \mathbb{R}$,  the following identity holds:
\begin{equation}
\langle \mathcal{O}\rangle_t= \langle \mathcal{O}\rangle_0+\int_0^t \langle (\mathcal{O} \circ S^\tau)\ \Omega^{f_0,V}\rangle_0 \ d\tau\,. \label{response}
\end{equation}
\end{lemma}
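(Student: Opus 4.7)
The plan is to differentiate both sides of \eqref{response} in $t$ and reduce the claim to an application of Proposition~\ref{sec3:prop1}. At $t = 0$ both sides equal $\langle\mathcal{O}\rangle_0$, so it suffices to check that the derivatives coincide. Using Proposition~\ref{sec3:prop1} with $s=0$ to rewrite $f_t = e^{\Omega_{-t,0}^{f_0,V}}\, f_0$, the definition \eqref{aver} becomes
\begin{equation*}
\langle\mathcal{O}\rangle_t \;=\; \int_{\mathcal{M}} \mathcal{O}(\theta)\, e^{\Omega_{-t,0}^{f_0,V}(\theta)}\, f_0(\theta)\, d\theta,
\end{equation*}
i.e.\ an integral against the \emph{fixed} reference measure $f_0\, d\theta$ in which all the $t$-dependence sits in a single exponential factor.

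Differentiating under the integral sign and using
\begin{equation*}
\frac{d}{dt}\,\Omega_{-t,0}^{f_0,V}(\theta) \;=\; \frac{d}{dt}\int_{-t}^{0}\Omega_V^{f_0}(S^\tau\theta)\, d\tau \;=\; \Omega_V^{f_0}(S^{-t}\theta),
\end{equation*}
which is immediate from \eqref{notat} and the fundamental theorem of calculus, together with the identity $e^{\Omega_{-t,0}^{f_0,V}} f_0 = f_t$ applied once more, I obtain
\begin{equation*}
\frac{d}{dt}\langle\mathcal{O}\rangle_t \;=\; \int_{\mathcal{M}} \mathcal{O}(\theta)\,\Omega_V^{f_0}(S^{-t}\theta)\, f_t(\theta)\, d\theta.
\end{equation*}
The last step is the change of variables $\theta = S^t\phi$ inside this integral. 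The Jacobian of $S^t$ is $\exp(\Lambda_{0,t}(\phi))$, and \eqref{dyn3} with $s=0$ yields $f_t(S^t\phi)\exp(\Lambda_{0,t}(\phi)) = f_0(\phi)$, so the two exponentials cancel exactly. The integrand collapses to $\mathcal{O}(S^t\phi)\,\Omega_V^{f_0}(\phi)\, f_0(\phi)$, giving $\frac{d}{dt}\langle\mathcal{O}\rangle_t = \langle(\mathcal{O}\circ S^t)\,\Omega_V^{f_0}\rangle_0$; integrating from $0$ to $t$ then produces \eqref{response}.

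The only points that require any care are the interchange of $\frac{d}{dt}$ with the integral and the change of variables that transports $S^{-t}$ from the Dissipation Function back onto the observable. In the Kuramoto setting both are harmless: $\mathcal{M}=\mathcal{T}^N$ is compact, the vector field $V_K$ is analytic so $S^t$ is a global smooth diffeomorphism with Jacobian $\exp(\Lambda_{0,t})$, and $f_0$ is assumed smooth and strictly positive, so the integrand is uniformly bounded on compact time intervals and Leibniz's rule applies. Conceptually, the lemma is essentially a repackaging of Proposition~\ref{sec3:prop1} combined with the Lagrangian identity \eqref{dyn3}, and I do not anticipate any genuine obstacle beyond these routine justifications.
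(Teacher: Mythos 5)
Your argument is correct, and it arrives at the same pivotal identity as the paper---namely $\frac{d}{dt}\langle\mathcal{O}\rangle_t=\langle(\mathcal{O}\circ S^t)\,\Omega_V^{f_0}\rangle_0$, followed by integration in $t$---but by a genuinely different route. The paper's proof never invokes Proposition~\ref{sec3:prop1}: it first establishes the covariance identity $\langle\mathcal{O}\rangle_{t+s}=\langle\mathcal{O}\circ S^s\rangle_t$ through the change of variables with Jacobian $e^{\Lambda_{0,s}}$ (Eq.~\eqref{idB}), then quotes the differentiation formula \eqref{ds1} from Ref.~\cite{Jepps16} and specializes it to $r=0$. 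You instead obtain the derivative directly by writing $f_t=e^{\Omega^{f_0,V}_{-t,0}}f_0$ (Proposition~\ref{sec3:prop1} with $s=0$), differentiating the exponential via the fundamental theorem of calculus to produce the factor $\Omega_V^{f_0}(S^{-t}\theta)$, and then performing a single change of variables $\theta=S^t\phi$, where the cancellation $f_t(S^t\phi)\,e^{\Lambda_{0,t}(\phi)}=f_0(\phi)$ follows from \eqref{dyn3}. Both arguments ultimately rest on the same two ingredients---the Jacobian identity \eqref{idb2} and the Lagrangian solution formula \eqref{dyn3}---so the difference is one of organization rather than substance; what your version buys is self-containedness, since it replaces the appeal to \eqref{ds1}, which the paper imports without proof, by an explicit computation internal to Section~2. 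One small point of care: your justification of differentiating under the integral assumes a bounded integrand, while the lemma only requires $\mathcal{O}$ integrable; for the continuous observables on the compact torus actually used in the paper this is immaterial, but in the stated generality one should dominate $|\mathcal{O}|\,e^{\Omega^{f_0,V}_{-t,0}}f_0$ uniformly on compact time intervals before invoking Leibniz's rule.
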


\begin{proof}
First of all, $f_0$ is smooth as a function of $\theta$ by assumption, and evolves according to the Liouville equation. 
Therefore, $f_t$ is also smooth with respect to $\theta$ and $t$ for every finite time $t$. 
In turn, $\Omega^{f_t,V}(\theta)$ is differentiable with respect to $\theta$ and $t$, if $f_0$ (that depends only on $\theta$) is differentiable with respect to $\theta$. 
These conditions are immediately verified for differentiable $f_0$, and smooth dynamics on a compact manifold. Therefore two identities can be derived for integrable $\mathcal{O}$:
\begin{eqnarray*}
 \mathcal{O}_{0,s}(\theta)&=&\int_{0}^s\mathcal{O}(S^{u}\theta) du= \int_{\tau}^{s+\tau}\mathcal{O}(S^{u-\tau}\theta) du=\int_{\tau}^{s+\tau}\mathcal{O}(S^{-\tau}S^{u}\theta) du\nonumber \\
 &=&\mathcal{O}_{\tau,s+\tau}(S^{-\tau}\theta) 
\end{eqnarray*}
which is valid for every $\tau\in\R$, and
\begin{eqnarray}
   \langle \mathcal{O}\rangle_{t+s}&=&\int \mathcal{O}(\theta) f_{t+s}(\theta) d\theta\nonumber\\
   &=&\int \mathcal{O}(S^s(S^{-s}\theta)) f_{t+s}(S^s(S^{-s}\theta)) \left|\frac{\partial \theta}{\partial (S^{-s}\theta)}\right| d(S^{-s}\theta) \nonumber \\
   &=& \int \mathcal{O}(S^s(S^{-s}\theta)) f_{t+s}(S^s(S^{-s}\theta)) \exp\Big\{\Lambda_{-s,0}(\theta)\Big\} d(S^{-s}\theta)\nonumber \\
   &=& \int \mathcal{O}(S^s(S^{-s}\theta)) f_{t+s}(S^s(S^{-s}\theta)) \exp\Big\{\Lambda_{0,s}(S^{-s}\theta)\Big\} d(S^{-s}\theta)\nonumber \\ 
   &=&\int \mathcal{O}(S^s\theta) f_{t+s}(S^s\theta) \exp\Big\{\Lambda_{0,s}(\theta)\Big\} d\theta =\int \mathcal{O}(S^s\theta) f_{t}(\theta) d\theta \nonumber \\
   &=& \langle \mathcal{O}\circ S^s \rangle_t 
   \label{idB}
\end{eqnarray}
to obtain \cite{Jepps16}:
\begin{equation}
    \frac{d}{ds}\langle\mathcal{O}\rangle_{s}=\langle \mathcal{O} \ (\Omega^{f_r,V}\circ S^{r-s})\rangle_s \label{ds1}
\end{equation}
which holds $\forall r\ge 0$.
Note that in Eq. \eqref{idB} we used the relation
\begin{equation}
\label{idb2}
   \left|\frac{\partial \theta}{\partial (S^{-s}\theta)}\right|= \exp\Big\{\Lambda_{-s,0}(\theta) \Big\} 
\end{equation}
which is discussed in \ref{app:appB}, see Eq. \eqref{jac2}. Choosing $r=0$ in \eqref{ds1}, one finds
\begin{equation}
    \frac{d}{ds}\langle\mathcal{O}\rangle_{s}=\langle \mathcal{O} \ (\Omega^{f_0,V}\circ S^{-s})\rangle_s=\langle (\mathcal{O}\circ S^{s}) \ \Omega^{f_0,V}\rangle_0 \label{ds2}
\end{equation}
where we used (\ref{idB}).
Then, integrating over time from $0$ to $t$, Eq.(\ref{ds2}) yields (\ref{response}).
\end{proof}

The apparently peculiar definition of the Dissipation Function is motivated by the fact that it can be associated with the energy dissipation of particle systems, if $f_0$ is properly chosen.
In particular, this is the case for models of nonequilibrium molecular dynamics, such as the Gaussian and the Nos\'e - Hoover thermostatted systems, if $f_0$ is the invariant probability 
density for the corresponding equilibrium dynamics, 
{\em i.e.}\ the dynamics subjected to the same constraints of the nonequilibrium ones, in which the dissipative forces are switched off. In other words, $\Omega^{f_0,V}$ equals the energy dissipation if 
$\Omega^{f_0,V_0} \equiv 0$ and $V_0$ is the (non dissipative) vector field implementing the same constraints that $V$ does \cite{SRE2007}. Typical constraints are the constant internal energy, the constant kinetic energy, the constant temperature, the constant pressure etc.. The state characterized by $f_0$ may be prepared like that at start. Alternatively, one usually thinks that it is generated by the equilibrium dynamics:
\begin{equation}
\dot{\mathbf{\theta}}=V_0(\mathbf{\theta}) \label{refer}
\end{equation} 
started long before the time $t=0$, so that at time 0 it is realized. 
While this is not mathematically required, it is physically convenient, and it helps our intuition to assume that $\mu_0$ is invariant under the dynamics \eqref{refer}, which we call  \textit{unperturbed} or \textit{reference} dynamics.
At time $t=0$, the dynamics \eqref{refer} is perturbed and the perturbation remains in place for all $t > 0$. 

In general, the density $f_0$ is not invariant under the perturbed vector field $V$, cf.\ Eq.\eqref{ODEgen}.
Therefore, it will evolve as prescribed by Eq.(\ref{dyn}) into a different density, $f_t$, at time $t>0$. Nevertheless, Eq.(\ref{response}) expresses the average $\langle \mathcal{O}\rangle_t $ 
in terms of a correlation function computed with respect to $f_0$, the non-invariant density, which is only invariant under the unperturbed dynamics.

The full range of applicability of this theory is still to be identified. However, it obviously applies to smooth dynamics on smooth compact manifolds, such as the Kuramoto dynamics (\ref{kura1}), 
which has $\mathcal{M}= \mathcal{T}^N$. One advantage of using the Dissipation Function, compared to other possible exact approaches to response, apart from molecular dynamics efficiency, 
is that $\Omega^{f_0,V}$ corresponds to a physically measurable quantity, {\em e.g.}\ proportional to a  current, that is adapted to the initial state of the system of interest. 
Moreover, it provides necessary and sufficient conditions for relaxation of ensembles, as well as sufficient conditions for the single system relaxation, known as T-mixing \cite{SRE2007,Jepps16}.
The analysis of the response theory for a specific example of the Kuramoto model is discussed in the next Section.

\Section{Response theory for identical oscillators}
\label{sec:sec3}
Let us focus on the case of identical oscillators, namely the Kuramoto dynamics in which all the natural frequencies $\omega_i$ in Eq.\eqref{kura1} equal the same constant $\omega \in \mathbb{R}$.
In particular, let the unperturbed dynamics be defined by the vector field $V_0(\theta)=W=(\omega, \dots , \omega)$,
which corresponds to $K=0$ in Eq.\eqref{kura1}, {\em i.e.}\ to decoupled oscillators, equipped with same natural frequency.
Such dynamics are conservative, since
${\rm div}_\theta V_0 = 0$. The corresponding steady state can then be considered an equilibrium state. At time $t=0$ the perturbation $V$ is switched on,
and we can write:
\begin{equation}\label{defV}
\dot{\mathbf{\theta}}=\begin{cases}
W & t<0\\
W + V(\theta) & t > 0 \, .
\end{cases}
\end{equation} 
The perturbed dynamics corresponds to the Kuramoto dynamics \eqref{kura1}, which is not conservative, cf.\ Eq.\eqref{div}. 
As an initial probability density, invariant under the unperturbed dynamics, we may take the factorized density:
\begin{equation}
f_0(\theta)=(2 \pi)^{-N}     \label{invden}
\end{equation}
which, indeed, yields:
\begin{equation}\label{invf0}
\Omega^{f_0,V_0}=-\left(\text{div}V_0+V_0\cdot \nabla \log f_0 \right)\equiv 0 \, , \quad
\mbox{and ~} \quad  {\frac{\partial f} {\partial t}} = 0\,.
\end{equation}
After the perturbation, the Dissipation Function takes the form:
\begin{equation}
\Omega^{f_0,V}= - \left( \text{div}_\theta V + V \cdot \nabla \log f_0 \right) = K \left( NR^2 - 1 \right) = 
\frac{K}{N}\sum_{i , j=1 }^N\cos(\theta_j-\theta_i) - K
\label{omega}
\end{equation}
and the density evolves as:
\begin{equation}
    f_t(\theta)= \frac{1}{(2\pi)^N} 
\exp \Big[ -K \left( t - N R_{-t,0}^2(\theta) \right) \Big] 
\label{DensEvol}
\end{equation}
where $R_{-t,0}$ denotes the integral of $R$ from time $-t$ to 0, cf.\ Eq.\eqref{notat}.

\begin{remark}\label{cinf}
The Dissipation Function Eq.\eqref{omega} is of class $C^\infty$.
\end{remark}

\noindent
Using the formula (\ref{response}) to compute the response for the observable $\mathcal{O}=\Omega^{f_0,V}$, we obtain:
\begin{equation}
\langle \Omega^{f_0,V}\rangle_t= \langle \Omega^{f_0,V}\rangle_0 + \int_0^t \langle (\Omega^{f_0,V}\circ S^{\tau}) \Omega^{f_0,V}\rangle_0 \ d\tau 
\label{response2}
\end{equation}
that is 
\begin{align*}
    &\int_{\mathcal{M}}  \Omega^{f_0,V}(\theta) f_t(\theta) d\theta \\
    &\qquad  = (2 \pi)^{-N} \int_{\mathcal{M}}  \Omega^{f_0,V}(\theta) d\theta
    + (2 \pi)^{-N} \int_0^t \int_{\mathcal{M}}  \Omega^{f_0,V} (S^\tau(\theta)) \Omega^{f_0,V} (\theta) \, d\theta d\tau\,.
\end{align*}
Moreover:
\begin{equation}
 \langle R^2 \rangle_0=\frac{1}{N}  \, ,
\quad \mbox{hence } \, \, \,
\langle \Omega^{f_0,V} \rangle_0=K \big( N \langle R^2 \rangle_0 - 1 \big) = 0 
\label{ident2-bis}
\end{equation}
as expected.
\begin{remark}
Note that the scalar field $\Omega^{f_0,V_0}$ is identically 0, while $\Omega^{f_0,V}$ is not, see Eq.\eqref{omega}. However, the phase space average  $\left\langle \Omega^{f_0,V} \right\rangle_0$ vanishes.
\end{remark}
\noindent
Therefore, using Eqs.\eqref{response} and \eqref{omega}  we can write:
\begin{align*}
    \left\langle \Omega^{f_0,V}\right\rangle_t &=  \int_0^t \left\langle (\Omega^{f_0,V}\circ S^{\tau}) \Omega^{f_0,V}\right\rangle_0 \ d\tau \\
    &= KN \int_0^t \left\langle \Omega^{f_0,V} \left[ R^2 \circ S^{\tau} \right] \right\rangle_{0} d\tau - K \int_{0}^t \left\langle \Omega^{f_0,V} \right\rangle_0 \, d\tau\\ 
    &= KN \int_{0}^t \left\langle \Omega^{f_0,V} \left[ R^2 \circ S^\tau \right] \right\rangle_0\, d\tau \\
    &= K^2 N^2 \int_{0}^t \left\langle R^2 \left[ \, R^2 \circ S^\tau \right] \right\rangle_0 \,d \tau - K^2 N \int_{0}^t \left\langle R^2 \circ S^\tau \right\rangle_0\, d\tau \, .
\end{align*}
For the second integral we have:
\begin{align*}
    \int_0^t \left\langle R^2 \circ S^\tau \right\rangle_0 d\tau &= \frac{1}{(2\pi)^N} \int_0^{t} \int_{\mathcal{M}} R^2 (S^\tau \theta) d\theta d\tau \\
    &= \frac{1}{(2\pi)^N} \int_0^{t} \int_{\mathcal{M}} R^2 (S^\tau \theta) \left| \frac{\partial\theta}{\partial S^\tau \theta} \right| dS^\tau \theta d\tau \\
&= \frac{1}{(2\pi)^N} \int_0^{t} \int_{\mathcal{M}} R^2 (S^\tau \theta)  \exp\Big\{\Lambda_{0,\tau}(\theta)\Big\} dS^\tau \theta \, .
\end{align*}
Explicit calculations can be carried out for $N=2$ and will be discussed in Sec. \ref{subsec:N2}, while the study of the general case with $N>2$ is deferred to Sec. \ref{subsec:generalN}.

\subsection{The case with two oscillators}
\label{subsec:N2} 

\noindent
For $N=2$ and $\omega\ge 0$, consider the system for two oscillators: 
\begin{equation}\label{eq:N=2-nonid}
    \begin{cases}\displaystyle
    \dot{\theta}_1 =  \frac {\omega}2 
    + \frac{K}{2} \sin(\theta_2 - \theta_1) &
    \\[2mm]
    \displaystyle
    \dot{\theta}_2 = - \frac {\omega}2 
    + \frac{K}{2} \sin( \theta_1 - \theta_2)\,. &
    \end{cases}
\end{equation}
In the case in which all natural frequencies coincide, as in Eq.\eqref{eq:N=2-nonid} for $\omega=0$, the oscillators are referred to as \textit{identical}. Setting 
$\psi= \theta_1 - \theta_2$, we obtain the %
following equation:
\begin{equation}
    \frac{d \psi}{dt}= \omega - K\sin(\psi) \label{psi}\,.
\end{equation}
With a slight abuse of notation, in the following we denote by $S^t\theta$, $S^t\psi$ the flows corresponding to \eqref{eq:N=2-nonid}, \eqref{psi} respectively, with initial data $\theta=(\theta_1,\theta_2)$ and $\psi=\theta_1-\theta_2$. Then, the solution of \eqref{psi} can be explicity expressed as 
\begin{equation}
\tan\left(\frac{S^t\psi}2\right) 
= g(\psi,t) 
\label{g}
   \end{equation}
where:
\begin{itemize}
\item if $K > \omega= 0$, then
\begin{equation*}
    g(\psi,t)= e^{-Kt} \tan\left(\frac{\psi}{2}\right) \,;
\end{equation*}    

 \item if $K > \omega>0$, then
   \begin{align*}
       g(\psi,t) &=  \frac{K}{\omega} + \frac{\sqrt{K^2 - \omega^2}}{\omega}  \cdot
\frac{1+h_1(\psi)\, e^{t\sqrt{K^2 -\omega^2}}} 
{1-h_1(\psi)\, e^{t\sqrt{K^2 -\omega^2}}}       
       \\[2mm]
       h_1(\psi) &= \frac{\omega\tan(\frac\psi 2)-K - \sqrt{K^2 -\omega^2}}{\omega\tan(\frac\psi 2)-K + \sqrt{K^2 -\omega^2}}
           \,.
    \end{align*}
 
The formulas here above can be deduced by \cite[Lemma D.2]{CHOI2012735}, Case 1;

\item if $0\le K< \omega $, then 
\begin{align*}
    g(\psi,t) &= \frac{K}{\omega} + \frac{\sqrt{\omega^2 - K^2}}{\omega} \tan \left( \frac{t\sqrt{\omega^2 - K^2} }{2} + h_2(\psi)  \right)  \\
    h_2(\psi) &= \arctan \frac{\omega \tan \left( \frac{\psi}{2} \right) - K}{\sqrt{\omega^2 - K^2}}\,,
\end{align*}
see \cite[Lemma D.2]{CHOI2012735}, Case 3 with $R^\infty=\omega/K$. 
\end{itemize}

\noindent
Recalling Eq.\eqref{R2} and using the identity $1+\cos x = 2 \left(1+ \tan^2\left(\frac x2\right) \right)^{-1}$,  
we find that $(R^2 \circ S^t)$ 
can be written as
\begin{equation}
   R^2(S^t \theta)
   = \frac{1}{2} \left[ 1+ \cos(S^t\psi) \right] = \frac{1}{g^2(S^t \psi)+1}\, , \label{R2g}
\end{equation}
For $\omega=0$, one explicitly obtains:
\begin{equation}
    R^2(S^t \theta) = \left(\tan^2 \left(\frac{ \psi}
    {2} \right)e^{-2K t}+1 \right)^{-1} \label{R2g1}
\end{equation}
and
\begin{eqnarray}
&&S^t \psi \to 0\quad \mbox{for } t \to +\infty \, ,\qquad \mbox{if}\ |\psi| \ne \pi
\\
&&|S^t  \psi| \to \pi\quad \mbox{for } t \to - \infty \, ,\qquad \mbox{if}\ 
\psi \ne 0\,.
\end{eqnarray}
In particular, for $\theta_1 \ne \theta_2$ 
and $\theta_1$, $\theta_2\in [0,2\pi)$,
the $t \to -\infty$ limit yields $S^t \psi \to -\pi$ if $\theta_1 < \pi$, and $S^t \psi \to \pi$ if $\theta_1 > \pi$. 
Then, the set 
$$
E_\infty=\{ (\theta_1,\theta_2) \in \mathcal{T}^2 : \theta_1 = \theta_2 \}
$$ 
is invariant and attracting for the Kuramoto dynamics, while the set 
$$
E_{-\infty}=\{ (\theta_1,\theta_2) \in \mathcal{T}^2 : |\theta_1 - \theta_2| = \pi \}
$$
is invariant and repelling. This also implies that: 
$$
R^2(S^t \theta) \to 0 \, ,\quad 
\Omega^{f_0,V} \to -K  \, ,\qquad 
\mbox {for } ~~ \psi \ne 0 \, , ~
t \to - \infty
$$
while 
\begin{equation}
    R^2(S^t \theta) \to 1 \, ,\quad 
    \Omega^{f_0,V} \to K \, ,\qquad  
    \mbox{for } ~~ |\psi| \ne \pi \, , ~t \to \infty \, .
\nonumber
\end{equation}
\vskip 4pt \noindent
Consequently, Eq.\eqref{DensEvol} shows that the probability piles up on the zero Lebesgue measure sets $E_\infty$ and $E_{-\infty}$, respectively for $t \to \infty$ and $t \to -\infty$.

For $\tau \ge 0$, the following relations also hold:
\begin{equation}
\label{r2}
    \left\langle R^2 \circ S^{\tau} \right\rangle_0 = \frac{1}{(2\pi)^2} \int_{\mathcal{M}} \frac{1}{\tan^2(\frac{\theta_1 - \theta_2}{2})e^{-2K\tau}+1} d\theta = 
\frac{1}{e^{-K\tau}+1}
\end{equation}
and 
\begin{equation*}
    \left\langle R^2 (R^2 \circ S^\tau) \right\rangle_0 = \frac{1}{8\pi^2} \int_{\mathcal{M}} \frac{1+\cos(\theta_1 - \theta_2)}{\tan^2(\frac{\theta_1 - \theta_2}{2})e^{-2K\tau}+1} d\theta =
\frac{2e^{-K\tau}+1}{2(e^{-K\tau}+1)^2}
\end{equation*}
which then yields
\begin{equation*}
    \int_{0}^t \left\langle  R^2 \circ S^\tau \right\rangle_0 \,d \tau = 
    t + \frac{ \ln \left( e^{-Kt}+1 \right)}{K} - \frac{\ln(2)}{K}
\end{equation*}
and
\begin{align*}
    & \int_{0}^t \left\langle R^2 (R^2 \circ S^\tau)  \right\rangle_0\, d\tau \\
    & \qquad = \frac{t}{2} + \frac{1}{2K} \left[ \frac{3}{2} + \ln \left( \frac{e^{-Kt}+1}{2} \right) - \frac{2}{e^{Kt}+1} - \frac{1}{e^{-Kt}+1} \right] \,.
\end{align*}
Thus, we finally obtain the explicit expressions
\begin{equation}
    \left\langle \Omega^{f_0,V} \right\rangle_t= 
    K \tanh \left(\frac{K t}{2}\right) \label{Omexpl}
\end{equation}
and
\begin{equation}
    \left\langle (\Omega^{f_0,V}\circ S^t) \Omega^{f_0,V} \right\rangle_0= 
    \frac{K^2}{1+\cosh(Kt)} \,.
    \label{omom}
\end{equation}
In the limit $t\to+\infty$, we thus find the asymptotic values
\begin{equation}\label{eqdiss}
    \left\langle \Omega^{f_0,V} \right\rangle_t 
    \to K    \qquad \text{and} \qquad \left\langle (\Omega^{f_0,V}\circ S^t) \Omega^{f_0,V} \right\rangle_0 \to 0
\end{equation}
In particular, the two-time autocorrelation of $\Omega^{f_0,V}$ is monotonic
as also shown in the two panels of Fig.\ref{fig:fig1}. Indeed, Eq.\eqref{omom} yields, for $t\ge 0$:
\begin{equation*}
    {\frac{\rm d}{{\rm d} t}} 
    \left\langle (\Omega^{f_0,V}\circ S^t) \Omega^{f_0,V} \right\rangle_0 = 
-K^2 {  
\frac{\sinh K t} {\left(1 + \cosh K t \right)^2}}\le 0 \, .
\end{equation*}

\subsection{The general case} 
\label{subsec:generalN}
In this Subsection we assume $N\ge2$ and $\omega=0$, considering the following dynamics:
\begin{equation}\label{kura1-bis}
\dot{\theta}_i=\frac{K}{N}\sum_{j=1}^N \sin(\theta_j-\theta_i) = K R \sin(\Phi-\theta_i) \, , \qquad i=1,\dots,N \,. 
\end{equation}
where $R$ and $\Phi$ are 
defined in Eq.\eqref{order}.
We are going to prove that the observable $\left\langle\Omega^{f_0,V}\right\rangle_t$ is a monotonic function of time, and we can estimate the asymptotic value it attains in the large time limit.
\begin{figure}
    \centering
    \includegraphics[width=0.45\textwidth]{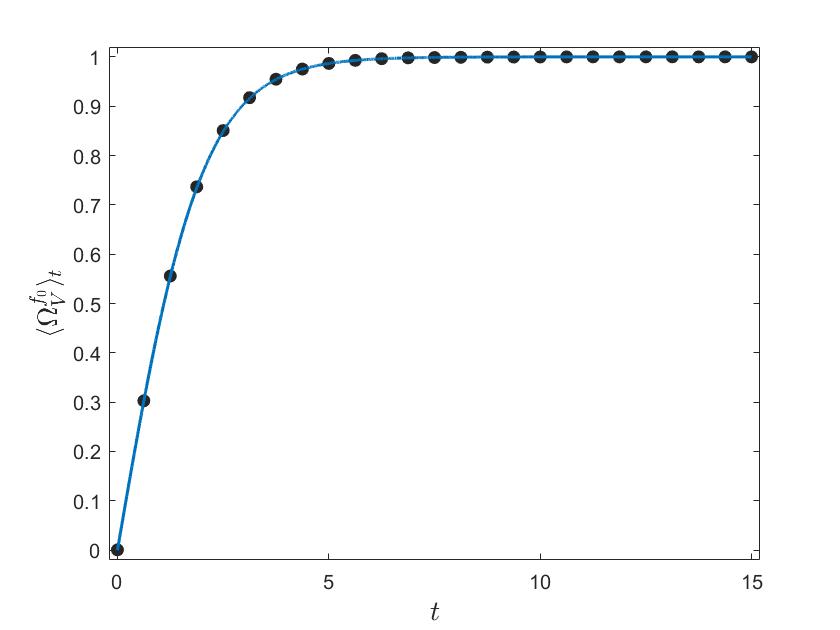}
    \includegraphics[width=0.45\textwidth]{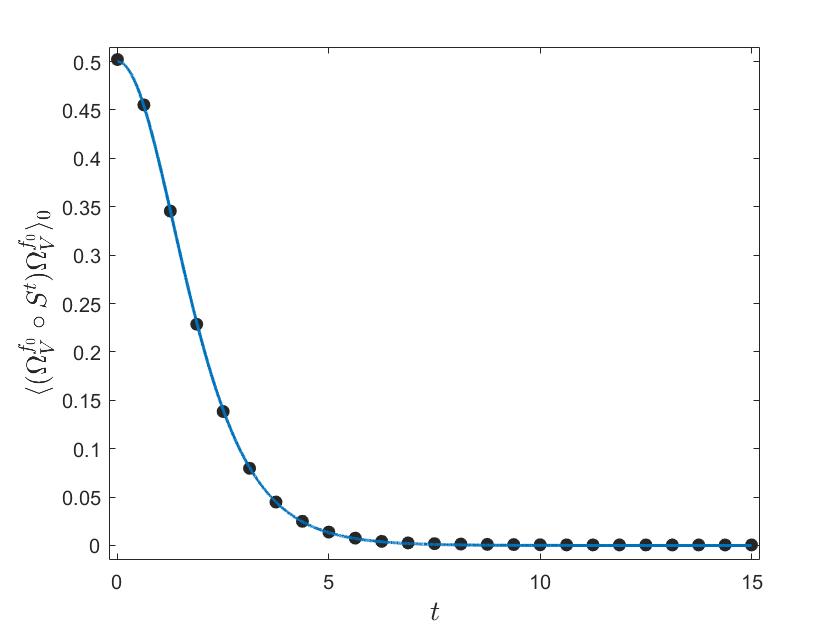}
    \caption{Behavior of $\langle\Omega^{f_0,V}\rangle_t$ and $\langle(\Omega^{f_0,V}\circ S^t)\Omega^{f_0,V}\rangle_0$ as functions of time, for $N=2$, $K=1$ and $\omega=0$. Disks and solid lines correspond to the numerical and analytical results, respectively. The averages were taken over a set of $5000$ trajectories with initial data sampled from the uniform distribution on $[0,2\pi)$.}
    \label{fig:fig1}
\end{figure}
We start by proving the following result.
\begin{lemma}\label{lem:5}
For every $t>0$, the time derivative of the expectation of the Dissipation Function obeys:
\begin{equation}
\label{ineq2}
\frac{d}{dt}\left(\Omega^{f_0,V}(S^t \theta)\right) \ge 0 \, ~~ \mbox{ and } ~ ~
    \frac{d}{dt}\left\langle\Omega^{f_0,V}\right\rangle_t=\left\langle(\Omega^{f_0,V}\circ S^t)\Omega^{f_0,V}\right\rangle_0 \ge 0 \,.
\end{equation}
\begin{proof}
First, we note that by setting $\mathcal{O}=\Omega^{f_0,V}$ in Eq. \eqref{ds2}, we find:
\begin{equation}
    \frac{d}{dt}\langle\Omega^{f_0,V}\rangle_{t}=\langle (\Omega^{f_0,V}\circ S^{t}) \ \Omega^{f_0,V}\rangle_0 \,. \label{res1}
\end{equation}
Moreover, Eq. \eqref{idB} with $t=0$ and $\mathcal{O}=\Omega^{f_0,V}$ yields:
\begin{equation}
  \left\langle\Omega^{f_0,V} \right\rangle_t=\left\langle \Omega^{f_0,V}\circ S^t\right\rangle_0\,. \label{ident}
\end{equation}
Therefore, we can write:
\begin{eqnarray}\label{ident2}
    \frac{d}{dt}\left\langle\Omega^{f_0,V}\circ S^t \right\rangle_0&=&\frac{d}{dt}\int_\mathcal{M}\Omega^{f_0,V}(S^t \theta)f_0(\theta) d\theta \nonumber\\
    &&\hskip -50pt
    =\int_\mathcal{M}\frac{d}{dt}\left(\Omega^{f_0,V}(S^t \theta)\right)f_0(\theta) d\theta=\left\langle \frac{d}{dt}\left(\Omega^{f_0,V}(S^t \theta)\right)\right\rangle_0 
\end{eqnarray}
Then, using Eq.(2.5) in Ref.\cite{BCM-CMS2015} we find:
\begin{equation}\label{eq:deriv-R2}
 \frac{d}{dt} R^2(S^t \theta) = \frac{2 K}{N}R^2(S^t \theta) \sum_{j=1}^N \sin^2\left(S^t\theta_j-\Phi\left(S^t \theta\right)\right) 
\end{equation}
where $S^t \theta_j$ denotes the $j-$th element of $S^t \theta$, and then 
\begin{equation}\label{ineq1}
    \frac{d}{dt}\left(\Omega^{f_0,V}(S^t \theta)\right)=2 K^2 R^2(S^t \theta) 
    \left[\sum_{j=1}^N \sin^2\left(S^t\theta_j-\Phi\left(S^t \theta\right)\right)  \right]\ge 0
\end{equation}
for all $\theta\in \mathcal{M}$. By integrating over $\mathcal{M}$ we obtain \eqref{ineq2}. This completes the proof.
\end{proof}
\end{lemma}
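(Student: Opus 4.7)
The second identity in \eqref{ineq2}, namely $\frac{d}{dt}\langle\Omega_V^{f_0}\rangle_t=\langle(\Omega_V^{f_0}\circ S^t)\Omega_V^{f_0}\rangle_0$, is the exact response formula of Lemma~\ref{lem:1} specialized to the observable $\mathcal{O}=\Omega_V^{f_0}$, so nothing new is required for it; the substantive content of the lemma is the pointwise bound $\frac{d}{dt}\Omega_V^{f_0}(S^t\theta)\ge 0$, from which nonnegativity of the correlation follows by integrating against the uniform density $f_0=(2\pi)^{-N}$ and exchanging time derivative with phase-space integration.

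For the pointwise inequality, I would exploit the gradient structure \eqref{grad} together with \eqref{omega}. With $\omega=0$ the drift reduces to $V_K=-\nabla f$, with $f(\theta)=\frac{K}{2N}\sum_{i,j}(1-\cos(\theta_j-\theta_i))$. The identity \eqref{R2} allows us to rewrite this potential in a form that makes its link to $R^2$ transparent:
\begin{equation*}
f(\theta)=\tfrac{KN}{2}\bigl(1-R^2(\theta)\bigr),
\qquad
\Omega_V^{f_0}(\theta)=K\bigl(NR^2(\theta)-1\bigr),
\end{equation*}
so that both the potential and the Dissipation Function are affine functions of $R^2$. Differentiating along the flow and using $\dot\theta=-\nabla f$ gives
\begin{equation*}
\frac{d}{dt}\Omega_V^{f_0}(S^t\theta)
= KN\,\frac{d}{dt}R^2(S^t\theta)
= -2\,\frac{d}{dt}f(S^t\theta)
= 2\,|\nabla f(S^t\theta)|^2
= 2\,|V_K(S^t\theta)|^2\ge 0.
\end{equation*}
Substituting the componentwise form \eqref{V-i}, $V_i=KR\sin(\Phi-\theta_i)$, recovers the explicit expression in \eqref{ineq1} and, as a by-product, the identity \eqref{eq:deriv-R2} quoted from \cite{BCM-CMS2015} — which can thus be derived rather than invoked as a black box.

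The only point requiring a bit of care is the exchange of $d/dt$ and the phase-space integral in the passage from \eqref{ineq1} to \eqref{ineq2}. This is routine: by Remark~\ref{cinf} and smoothness of the flow on the compact torus $\mathcal{T}^N$, the map $(t,\theta)\mapsto\Omega_V^{f_0}(S^t\theta)$ and its $t$-derivative are continuous, hence uniformly bounded on compact time intervals, so dominated convergence legitimizes moving $d/dt$ inside $\int_{\mathcal M}(\cdot)f_0\,d\theta$. Combining these steps with Eq.~\eqref{ident} (which is just \eqref{idB} at $t=0$) delivers both inequalities in \eqref{ineq2}. I do not anticipate any genuine obstacle; the only conceptual observation — and the hinge of the argument — is the algebraic reduction of both $f$ and $\Omega_V^{f_0}$ to $R^2$, which turns the monotonicity of $\langle\Omega_V^{f_0}\rangle_t$ into the Lyapunov property of the gradient flow.
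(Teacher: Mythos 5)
Your proposal is correct, and for the one substantive step --- the pointwise inequality $\frac{d}{dt}\Omega_V^{f_0}(S^t\theta)\ge 0$ --- it takes a genuinely different route from the paper. The paper reduces $\Omega_V^{f_0}=K(NR^2-1)$ to $R^2$ and then imports the derivative formula \eqref{eq:deriv-R2} as Eq.~(2.5) of Ref.~\cite{BCM-CMS2015}, treating it as a known fact. You instead observe that, for $\omega=0$, the identity \eqref{R2} gives $f(\theta)=\tfrac{KN}{2}(1-R^2(\theta))$ for the potential in \eqref{grad}, so that $\Omega_V^{f_0}=K(N-1)-2f$ is an affine function of the Lyapunov potential of the gradient flow; the chain rule along $\dot\theta=-\nabla f$ then yields $\frac{d}{dt}\Omega_V^{f_0}(S^t\theta)=2\,|\nabla f(S^t\theta)|^2=2\,|V_K(S^t\theta)|^2\ge 0$, and substituting $V_i=KR\sin(\Phi-\theta_i)$ from \eqref{V-i} reproduces \eqref{ineq1} exactly and, as a by-product, proves \eqref{eq:deriv-R2} rather than citing it. I checked the algebra: $\sum_i K^2R^2\sin^2(\Phi-\theta_i)=K^2R^2\sum_j\sin^2(\theta_j-\Phi)$ matches the paper's integrand, so the two arguments land on the same expression. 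What your route buys is self-containedness and a conceptual explanation of \emph{why} the monotonicity holds (it is the Lyapunov property of the gradient structure); what the paper's route buys is brevity and a formula valid verbatim in the notation of the cited reference. The remaining steps --- the identity $\frac{d}{dt}\langle\Omega_V^{f_0}\rangle_t=\langle(\Omega_V^{f_0}\circ S^t)\,\Omega_V^{f_0}\rangle_0$ from \eqref{ds2}, the translation \eqref{ident} from \eqref{idB}, and the exchange of $d/dt$ with the integral justified by smoothness on the compact torus --- coincide with the paper's. One cosmetic caveat: you reuse the symbol $f$ for the potential while $f_0$ denotes the density; the paper has the same clash, but it is worth flagging in the write-up. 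Also note that your gradient argument relies on the restriction $\omega=0$ made at the start of Subsection~\ref{subsec:generalN} (otherwise the term $-\sum_i\omega_i\theta_i$ makes $f$ non-periodic on $\mathcal{T}^N$), so state that hypothesis explicitly where you invoke \eqref{grad}.
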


\begin{remark}
Unlike stationary current autocorrelations, that may fluctuate between positive and negative values, 
the two-time autocorrelation of $\Omega^{f_0,V}$, computed with respect to the initial probability measure, is non-negative.
\end{remark}

Theorem 2.4 of Ref.\cite{BCM-CMS2015} shows that non stationary solutions of the system \eqref{kura1-bis} converge, as 
$t\to+\infty$, 
either to a complete 
frequency synchronized state $\Theta^*$,
{\em i.e.}\ to a state denoted by $(N,0)$, that takes the form:
\begin{equation}
\Theta^*=\left(\varphi^*, \ldots, \varphi^*\right)  
\end{equation}
in which all phases are equal; or to a state denoted by $(N-1,1)$, that takes the form:
\begin{equation}
\Theta^\dagger =\left(\varphi^* + k_1\pi,
\varphi^*+k_2\pi,\varphi^*+k_3\pi,\varphi^*+k_4\pi,\ldots,\varphi^*+k_N\pi\right) 
\end{equation}
 where $k_i \in \{-1,+1\}$ 
 for a single $i \in \{1,2,...,N\}$, and all $k_j=0$ with $j \ne i$. 
 This can be understood also in terms of the Dissipation Function. In the first place, without loss of generality, let us consider a fixed point $\bar \theta$ of type $(N-1,1)$ whose antipodal is in the $N$-component, {\em i.e.}\ 
\begin{equation} 
 \bar \theta=(\varphi^*,\ldots,\varphi^* ,(\varphi^* +\pi) \, \mbox{mod} \, 2 \pi)   \label{eq:SType2}
\end{equation}
for a $\varphi^* \in [0,2\pi)$. Then, the following holds:

\begin{proposition}\label{SetNPrepo}
The set of initial data such that the solution to \eqref{kura1-bis} reaches a stationary  $(N-1,1)$-state for $t \to + \infty$ has 0-measure. 
\begin{proof}
For $V(\theta)$ as in \eqref{vfield}, the
Jacobian matrix $A(\theta) \dot = \nabla V (\theta)$ is given by 
\begin{equation}\nonumber
A_{ij}=
    \begin{cases}
    \frac{\partial V_j}{\partial \theta_i}=\frac{1}{N} \cos(\theta_i-\theta_j), & 
    i \neq j   \\ \\
    \frac{\partial V_j}{\partial \theta_j}= -\frac{1}{N} \sum_{k\neq j}^N \cos(\theta_j - \theta_k) & i = j  \, .
    \end{cases}
\end{equation}
For the fixed point $\bar{\theta}$ set in \eqref{eq:SType2} we obtain a symmetric matrix $\bar A = A(\bar{\theta})$ 
whose entries are
\begin{equation}
\bar A_{ij} =
    \begin{cases}
    \frac{1}{N} &  i \neq j \text{ and } i,j \neq N   \\[2mm]
    -\frac{1}{N} &  i\neq j \mbox{ and } i=N \text{ or } j=N
    \\[2mm]
    -\frac{N-3}{N} &  i=j  < N \\[2mm]
    \frac{N-1}{N} & i=j=N \, .
    \end{cases}
\end{equation}
By the symmetry of $\bar A$, the extremal representation of the eigenvalues $\lbrace \lambda_k \rbrace_{k=1}^N$ of $\bar A$ are  given by the optimization problem:
\begin{equation}\nonumber
    \max_{1\leq k \leq N}  \lambda_k = \max_{\lVert x \rVert=1 } \lbrace x' \bar{A} x \rbrace, \quad  \min_{1\leq k \leq N}  \lambda_k = \min_{\lVert x \rVert=1 } \lbrace x'\bar{A} x \rbrace \, .
\end{equation}
Setting $x$ to be the standard-basis vectors $\mathbf{e}_i$, where $\mathbf{e}_i$ denotes the vector with a 1 in the $i$th coordinate and 0's elsewhere, we see that
\begin{equation}\nonumber
    \min_{1\leq k \leq N}  \lambda_k \leq \min_{1\leq i \leq N} \lbrace \bar{A} \rbrace_{ii}=-\frac{N-3}{N}<0\,,\qquad 0<\frac{N-1}{N} = \max_{1 \leq i \leq N} \lbrace \bar{A} \rbrace_{ii} \leq \max_{1\leq k \leq N}  \lambda_k\,.
\end{equation}
Therefore, there exists at least one positive eigenvalue and at least one negative eigenvalue. Indeed, the matrix $\bar{A}$ has the eigenvalues $\lambda_{-}=-(N-2)/N$ with algebraic multiplicity $N-2$, $\lambda_2=0$ and $\lambda_3=1$
with algebraic multiplicity $1$. This can be checked considering the proposed subspaces of the center, stable and unstable subspace of the linearized system at $\bar \theta$ 
\begin{equation*}
E^c=
    \left\lbrace \begin{bmatrix}
1\\ 
1\\ 
\vdots \\ 
1\\ 
\vdots\\ 
1
\end{bmatrix}
\right\rbrace, \,
E^s= \left\lbrace
\begin{bmatrix} 

-1\\ 
1\\ 
0 \\ 
\vdots\\ 
0\\ 
0
\end{bmatrix}, \begin{bmatrix}
-1\\ 
0\\ 
1 \\ 
0\\ 
\vdots \\ 
0
\end{bmatrix}, \ldots, \begin{bmatrix}
-1\\ 
0\\ 
\vdots \\ 
0\\ 
1 \\ 
0
\end{bmatrix}
\right\rbrace
\text{ and }
E^u=\left\lbrace
\begin{bmatrix}
-1\\ 
-1\\ 
\vdots \\ 
-1\\ 
-1 \\ 
N-1
\end{bmatrix}  \right\rbrace.
\end{equation*}
 Then, the Center Manifold Theorem \cite[p.116]{perko2013differential} yields the existence of an $(N-2)$-dimensional stable manifold $W^s(\bar{\theta})$ tangent to the stable subspace $E^s$, and the existence of a $1$-dimensional unstable manifold $W^u(\bar{\theta})$, and $1$-dimensional center manifold $W^c(\bar{\theta})$ tangents to the $E^u$ and $E^c$ subspaces respectively. Consequently, the dimension of the center manifold conjoint with the stable manifold is smaller than $N$, which implies a null Lebesgue measure in $\mathbb{R}^n$. 
\end{proof}
\end{proposition}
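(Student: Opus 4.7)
The plan is to reduce the problem to a local linearization at a single representative $(N-1,1)$-fixed point and then apply a stable/center manifold argument. First I would exploit the rotational symmetry $\theta \mapsto \theta + c\mathbf 1$ and the permutation symmetry among oscillators in \eqref{kura1-bis}: every $(N-1,1)$-fixed point is the image of the representative $\bar\theta$ of \eqref{eq:SType2} under a measure-preserving diffeomorphism of $\mathcal T^N$. The full family of $(N-1,1)$-fixed points is parametrized by $\varphi^*\in[0,2\pi)$ together with a choice of antipodal index in $\{1,\dots,N\}$, so it suffices to bound the forward-convergent set at $\bar\theta$ and take a one-parameter/finite union at the end.

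Next I would compute the Jacobian $\bar A=\nabla V(\bar\theta)$. From $V_i(\theta)=(K/N)\sum_{j\neq i}\sin(\theta_j-\theta_i)$ one reads off $A_{ij}=(K/N)\cos(\theta_j-\theta_i)$ for $i\neq j$ and $A_{ii}=-\sum_{j\neq i}A_{ij}$, which at $\bar\theta$ reduce to off-diagonal entries $\pm K/N$ and two possible diagonal values according to whether the index equals $N$. I would then pin down the spectrum by exhibiting orthogonal invariant subspaces: the vector $\mathbf 1=(1,\dots,1)^\top$ lies in the kernel (rotational symmetry), the vector $(-1,\dots,-1,N-1)^\top$ spans a one-dimensional unstable subspace $E^u$ with strictly positive eigenvalue, and the $(N-2)$-dimensional complement spanned by $\{e_i-e_j:i,j<N\}$ is an eigenspace for a strictly negative eigenvalue. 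This yields a hyperbolic splitting transverse to the symmetry direction, with exactly one unstable direction.

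Then I would invoke the Stable/Center/Unstable Manifold Theorem at $\bar\theta$ to obtain local smooth invariant manifolds $W^s_{\mathrm{loc}}$, $W^c_{\mathrm{loc}}$, $W^u_{\mathrm{loc}}$ of dimensions $N-2$, $1$, $1$ respectively. Any trajectory converging to $\bar\theta$ as $t\to+\infty$ must eventually enter every neighborhood of $\bar\theta$ and hence lie in $W^s_{\mathrm{loc}}\cup W^c_{\mathrm{loc}}$, a smooth set of local dimension at most $N-1$ and therefore of zero Lebesgue measure. Pulling back by the diffeomorphisms $S^{-n}$, $n\in\mathbb N$, preserves null sets, so the entire forward basin of $\bar\theta$ is null. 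Taking the one-parameter union over $\varphi^*$ does not worsen this, because $E^c$ is tangent to precisely the curve of $(N-1,1)$-equilibria and contributes only one extra dimension; the finite union over the choice of antipodal index preserves measure zero.

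The main obstacle is handling the zero eigenvalue: a priori the center manifold could pick up additional convergent trajectories beyond the curve of equilibria. The resolution is to identify $W^c_{\mathrm{loc}}(\bar\theta)$ with the rotational orbit $\{\bar\theta+c\mathbf 1 : c\in\mathbb R\}$, on which the flow is stationary because each such point is itself an equilibrium of \eqref{kura1-bis}. Hence no genuinely new convergent initial data arise from the center direction, the combined stable-center set has true codimension one in $\mathcal T^N$, and the measure-zero conclusion for the union over all $(N-1,1)$-fixed points follows.
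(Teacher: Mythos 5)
Your proposal follows essentially the same route as the paper: linearize at the representative $(N-1,1)$ equilibrium $\bar\theta$, identify the spectrum of $\bar A$ as one zero eigenvalue (the rotation direction $\mathbf 1$), one positive eigenvalue, and $N-2$ negative ones, and invoke the center/stable manifold theorem to conclude that the convergent set lies in an $(N-1)$-dimensional invariant set, hence is Lebesgue-null. The only differences are that you additionally spell out two points the paper leaves implicit --- the pullback of the local center--stable manifold by $S^{-n}$ to cover the full basin, and the union over the one-parameter family of $(N-1,1)$ equilibria and the finite choice of antipodal index --- which is a welcome tightening but not a different argument.
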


\noindent
Moreover, we have:

\begin{lemma} {\bf (Synchronization):}
Irrespective of the initial condition $\theta \in \mathcal{T}$, the Dissipation Function obeys:
\begin{equation}
    \lim_{t \rightarrow \infty} \Omega^{f_0,V}(S^t \theta) = \begin{cases} K  \left( N -  1 \right) \, , ~~ \mbox{ for } ~ \theta \ne \Theta^\dagger 
    \\ \text{ }\\
    K \left( N- 1\right) \left( \frac{N-4}{N} \right) ~~ \mbox{ for } ~ \theta = \Theta^\dagger 
    \end{cases} \label{LimOmega}
\end{equation}
where $K(N-1)$, the maximum of $\Omega^{f_0,V}$ in $\mathcal{T}^N$,  corresponds to $(N,0)$ synchronization.
\end{lemma}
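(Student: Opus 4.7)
The plan is to combine three ingredients: an a priori upper bound $\Omega_V^{f_0}\le K(N-1)$ coming from $R\in[0,1]$; the monotonicity of $t\mapsto \Omega_V^{f_0}(S^t\theta)$ from Lemma~\ref{lem:5}; and the convergence-to-equilibrium result, Theorem~2.4 of \cite{BCM-CMS2015}, already invoked in the paragraph preceding Proposition~\ref{SetNPrepo}. First, from the explicit form $\Omega_V^{f_0}=K(NR^2-1)$ in Eq.~\eqref{omega} and $R\in[0,1]$, one immediately obtains $\Omega_V^{f_0}\le K(N-1)$ on $\mathcal{T}^N$, with equality iff $R(\theta)=1$, that is, iff $\theta$ is of type $(N,0)$. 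Combined with $\frac{d}{dt}\Omega_V^{f_0}(S^t\theta)\ge 0$ from Lemma~\ref{lem:5}, this guarantees that $\lim_{t\to\infty}\Omega_V^{f_0}(S^t\theta)$ exists and does not exceed $K(N-1)$, and at the same time it identifies $K(N-1)$ as the maximum of $\Omega_V^{f_0}$ on $\mathcal{T}^N$, as claimed in the statement.

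Next, I would invoke Theorem~2.4 of \cite{BCM-CMS2015}, which asserts that every trajectory of \eqref{kura1-bis} converges, as $t\to+\infty$, to a fixed point of type $(N,0)$ or $(N-1,1)$. Since $\Omega_V^{f_0}$ is $C^\infty$ by Remark~\ref{cinf}, continuity implies that the limit of $\Omega_V^{f_0}(S^t\theta)$ equals the value of $\Omega_V^{f_0}$ at the limiting equilibrium. A direct computation yields these values: at $\Theta^*=(\varphi^*,\ldots,\varphi^*)$ one has $R=1$ and $\Omega_V^{f_0}(\Theta^*)=K(N-1)$; at a $(N-1,1)$ state $\bar\theta$ as in Eq.~\eqref{eq:SType2}, the identity \eqref{order} gives
\begin{equation*}
NRe^{i\Phi}=(N-1)e^{i\varphi^*}+e^{i(\varphi^*+\pi)}=(N-2)e^{i\varphi^*}\,,
\end{equation*}
so $R=(N-2)/N$ and hence $\Omega_V^{f_0}(\bar\theta)=K\bigl((N-2)^2/N-1\bigr)=K(N-1)(N-4)/N$.

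Putting these facts together, if $\theta=\Theta^\dagger$ then the trajectory is stationary and the second case follows at once. Otherwise, by the convergence result the limit equals either $K(N-1)$ or $K(N-1)(N-4)/N$, and Proposition~\ref{SetNPrepo} ensures that the exceptional set of initial data whose limit is the latter value is Lebesgue-null. The main obstacle I anticipate is a literal reading of the dichotomy in the statement: strictly speaking, each $(N-1,1)$ fixed point $\Theta^\dagger$ admits a stable manifold $W^s(\Theta^\dagger)$ of positive dimension $N-2$, so there exist $\theta\ne\Theta^\dagger$ whose trajectories nevertheless converge to $\Theta^\dagger$ and yield the second value. The statement is therefore to be read modulo this measure-zero exceptional set, and it is precisely Proposition~\ref{SetNPrepo} that justifies the ``irrespective of the initial condition'' phrasing almost everywhere on $\mathcal{T}^N$.
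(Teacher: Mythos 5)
Your proof is correct and follows essentially the same route as the paper's: invoke Theorem 2.4 of \cite{BCM-CMS2015} together with the continuity of $\Omega_V^{f_0}$ (Remark~\ref{cinf}), then evaluate $R$, hence $\Omega_V^{f_0}=K(NR^2-1)$, at the two types of limiting equilibria, obtaining $K(N-1)$ and $K(N-1)(N-4)/N$ exactly as in the paper. Your closing caveat is well taken and in fact sharper than the paper's own argument: the paper asserts that every $\theta\ne\Theta^\dagger$ yields the limit $\Omega_V^{f_0}(\Theta^*)=K(N-1)$, silently passing over the $(N-2)$-dimensional stable manifold $W^s(\Theta^\dagger)$, whose points other than $\Theta^\dagger$ itself also converge to the $(N-1,1)$ value; as you observe, the stated dichotomy holds only up to the Lebesgue-null exceptional set controlled by Proposition~\ref{SetNPrepo}.
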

\begin{proof}
Because of Theorem 2.4 in Ref.\cite{BCM-CMS2015} and of the continuity of $\Omega^{f_0,V}$, the long time limit of $\Omega^{f_0,V} \circ S^t$ 
in the case $\theta \ne \Theta^\dagger$ is given by $\Omega^{f_0,V}(\Theta^*)$. Then,  Eq.\eqref{R2} and Eq.\eqref{omega}, yield
the first line of Eq.\eqref{LimOmega}. The case $\theta = \Theta^\dagger$, gives, instead:
\begin{equation}
    R^* e^{i\varphi^*}=\frac{1}{N} \left( (N-1)e^{i\varphi^*} + e^{i(\varphi^* + \pi)} \right) = \frac{N-2}{N} e^{i\varphi^*}\,.
\end{equation}
Substituting in Eq.\eqref{omega} we obtain the second line of \eqref{LimOmega}.
\end{proof}

\begin{remark}
Equation \eqref{LimOmega} implies that
\begin{equation}
 \lim_{N \rightarrow \infty} \lim_{t \rightarrow \infty} \frac{\Omega^{f_0,V}(S^t \theta)}{N} = K\,. \label{Ksteady}
\end{equation}
In other words, 
the large $t$ limit followed by the large $N$ limit implies that the coupling constant $K$, which drives the synchronization process in the Kuramoto dynamics \eqref{kura1}, equals the average Dissipation per oscillator. For fixed $N$,
synchronization is also evident from the fact that Eq.\eqref{ineq1} must converge to 0, for $\Omega^{f_0,V}$ to become constant.
\end{remark}
 
\noindent
This also implies $R^2(S^t\theta) \to 1$, as $t \to \infty$. It suffices to consider the definition \eqref{omega}
of $\Omega^{f_0,V}$.
\begin{figure}
    \centering
    \includegraphics[width=0.45\textwidth]{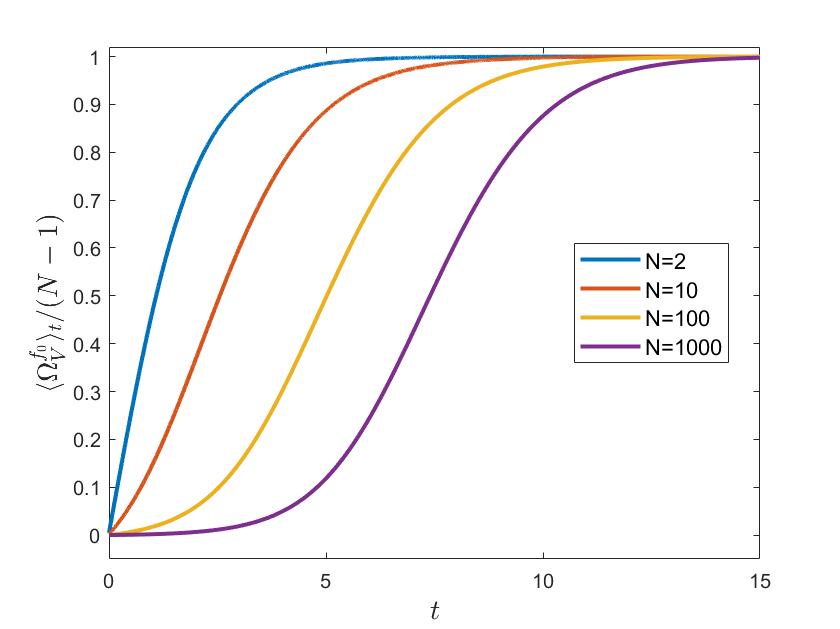}
    \includegraphics[width=0.45\textwidth]{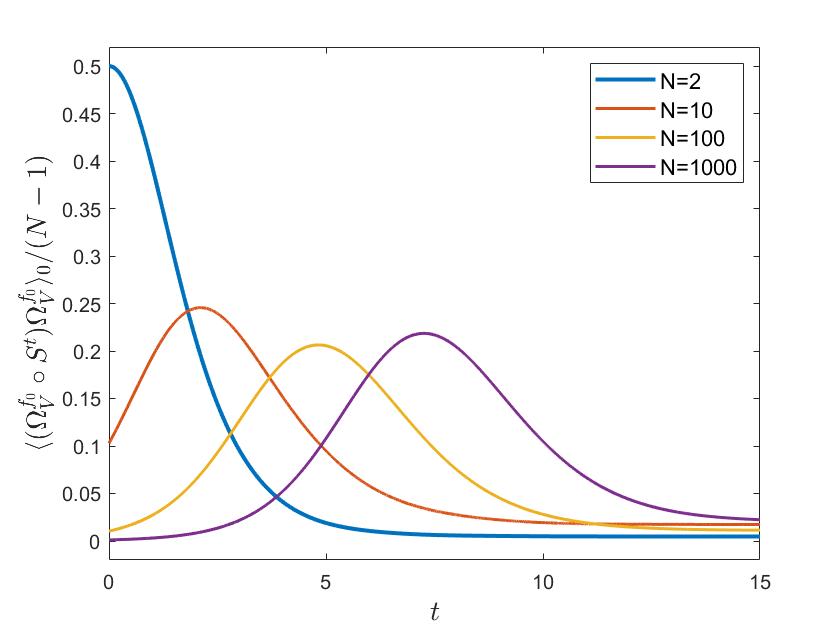}
    \caption{Behavior of $\langle \Omega^{f_0,V} \rangle_t$ (left panel) and $\langle (\Omega^{f_0,V} \circ S^t) \Omega^{f_0,V}\rangle_0$ (right panel), both rescaled by $(N-1)$, as functions of time, for $K=1$, $\omega=0$ and for different values of $N$. The curves on the right panel represent the time derivative of those in the left panel. In particular, $t=0$ in the right panel represents $K^2/N$, cf.\ Eq.\eqref{DOMdt}.}
    \label{fig:fig2}
\end{figure}
For different values of $N$, Fig.~\ref{fig:fig2} illustrates the behavior of $\langle \Omega^{f_0,V} \rangle_t$ and of its time derivative, 
which is $\langle (\Omega^{f_0,V}\circ S^t) \Omega^{f_0,V}\rangle_0$, as functions of time. 
The initial growth of the autocorrelation may look unusual, since autocorrelations are commonly found to decrease. However,
unlike standard calculations that rely on an invariant distribution,\footnote{In linear response the initial distribution is considered invariant to first order in the perturbation.}
our autocorrelation is computed with respect to the transient probability measure $\mu_0$. The figure portrays the result of numerical simulations. 
The right panel of Fig.~\ref{fig:fig2}, shows that for sufficiently large $N$ the autocorrelation function 
$\langle (\Omega^{f_0,V}\circ S^t)\Omega^{f_0,V}\rangle_0$ 
reaches a maximum before it decreases, as required for convergence to a steady state.
An interesting result is the following.
\begin{lemma}
For $N\ge 2$, the derivative of the time dependent average of $\Omega^{f_0,V}$, computed at time $t=0$ obeys:
\begin{equation}
   \left. \frac{d}{dt} \left\langle \Omega^{f_0,V} \right\rangle_t \right|_{t=0}=\left\langle \left(\Omega^{f_0,V} \right)^2\right\rangle_0=K^2 \frac{N-1}{N}\,.
    \label{DOMdt}
\end{equation}
\end{lemma}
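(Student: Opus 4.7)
The first equality is immediate: specializing identity \eqref{ineq2} from Lemma \ref{lem:5} at $t=0$ and using $S^{0}=\mathrm{id}$, one obtains
$$
\left.\frac{d}{dt}\langle \Omega_V^{f_0}\rangle_t\right|_{t=0}
= \langle \Omega_V^{f_0}\cdot (\Omega_V^{f_0}\circ S^{0})\rangle_0
= \langle (\Omega_V^{f_0})^{2}\rangle_0.
$$
So the real content is the evaluation $\langle(\Omega_V^{f_0})^{2}\rangle_0 = K^{2}(N-1)/N$. My plan is to reduce this to computing the second and fourth moments of $R$ under the uniform measure $f_0 = (2\pi)^{-N}$.

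Using the closed form $\Omega_V^{f_0} = K(NR^{2}-1)$ from \eqref{omega}, expansion gives
$$
\langle(\Omega_V^{f_0})^{2}\rangle_0
= K^{2}\bigl(N^{2}\langle R^{4}\rangle_0 - 2N\langle R^{2}\rangle_0 + 1\bigr).
$$
The second moment is already recorded as $\langle R^{2}\rangle_0 = 1/N$ in \eqref{ident2-bis}. So everything hinges on $\langle R^{4}\rangle_0$, which I view as the main technical step.

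To compute $\langle R^{4}\rangle_0$, I would exploit the complex representation $NR\,e^{i\Phi}=\sum_j e^{i\theta_j}$, which yields
$$
R^{4} = \frac{1}{N^{4}}\Bigl|\sum_{j=1}^{N} e^{i\theta_j}\Bigr|^{4}
= \frac{1}{N^{4}}\sum_{i,j,k,l=1}^{N} e^{i(\theta_i-\theta_j+\theta_k-\theta_l)}.
$$
Under $f_0$, the $\theta_j$ are i.i.d.\ uniform on $[0,2\pi)$, so by orthogonality each summand integrates to $1$ exactly when the exponent vanishes identically as a function of $\theta$, and to $0$ otherwise. The vanishing condition is equivalent to equality of the multisets $\{i,k\}=\{j,l\}$. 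A short combinatorial count splits into the diagonal case $i=j=k=l$ ($N$ quadruples) and the case of two distinct indices, giving $2N(N-1)$ ordered quadruples, for a total of $2N^{2}-N$. Therefore
$$
\langle R^{4}\rangle_0 = \frac{2N^{2}-N}{N^{4}} = \frac{2N-1}{N^{3}}.
$$

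Substituting back,
$$
\langle(\Omega_V^{f_0})^{2}\rangle_0
= K^{2}\Bigl(N^{2}\cdot\frac{2N-1}{N^{3}} - 2N\cdot\frac{1}{N} + 1\Bigr)
= K^{2}\Bigl(\frac{2N-1}{N} - 1\Bigr) = K^{2}\,\frac{N-1}{N},
$$
completing the identity. The only real obstacle is the bookkeeping in the counting argument for $\langle R^{4}\rangle_0$; alternatively one could expand $R^{2}=N^{-2}\sum_{i,j}\cos(\theta_j-\theta_i)$, use the product-to-sum formula, and invoke orthogonality of cosines on the torus, which leads to the same count in different clothing.
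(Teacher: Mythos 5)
Your proof is correct, and it takes a genuinely different route from the paper's. The paper never computes $\langle (\Omega_V^{f_0})^2\rangle_0$ as a static moment: it starts from the dynamical identity \eqref{ineq1}, $\frac{d}{dt}\Omega_V^{f_0}(S^t\theta)=2K^2R^2(S^t\theta)\sum_j\sin^2(S^t\theta_j-\Phi(S^t\theta))$ (itself derived from the evolution law \eqref{eq:deriv-R2} for $R^2$), evaluates the integrand at $t=0$ via the identity $R\sin(\Phi-\theta_j)=\frac1N\sum_l\sin(\theta_l-\theta_j)$, and then kills the cross terms $\sin(\theta_l-\theta_j)\sin(\theta_k-\theta_j)$ by orthogonality, leaving $2\frac{K^2}{N^2}\cdot\frac{N(N-1)}{2}=K^2\frac{N-1}{N}$. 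You instead take the first equality (which, as you say, is immediate from \eqref{ineq2} at $t=0$) as the starting point and reduce everything to the fourth moment of $R$ under the uniform measure, computed by the multiset-matching count for $\bigl|\sum_j e^{i\theta_j}\bigr|^4$; your count $N+2N(N-1)=2N^2-N$ and the resulting $\langle R^4\rangle_0=(2N-1)/N^3$ are correct (they check against the $N=2$ value $3/8$ and reproduce $K^2/2$ at $t=0$ in \eqref{omom}). What your approach buys is independence from the dynamics: it needs only the closed form $\Omega_V^{f_0}=K(NR^2-1)$ and moments of $R$ under $f_0$, so it would also give the variance of $\Omega_V^{f_0}$ at time zero for its own sake. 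What the paper's approach buys is a formula for $\frac{d}{dt}\langle\Omega_V^{f_0}\rangle_t$ valid at all $t\ge0$ (hence the monotonicity in Lemma \ref{lem:5}), of which the $t=0$ evaluation is just a special case. Both orthogonality arguments are essentially the same bookkeeping in different clothing, as you note.
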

\noindent
Note that the derivative of the mean Dissipation Function equals its autocorrelation function, as expressed by Eq.\eqref{ineq2}. 
Therefore, Eq.\eqref{DOMdt} gives the value of this autocorrelation function at $t=0$, as shown in the right panel of Fig.~\ref{fig:fig2}.
\begin{proof}
Using \eqref{ident}, \eqref{ident2} and \eqref{ineq1} we find that
\begin{equation}
    \frac{d}{dt} \left\langle \Omega^{f_0,V} \right\rangle_t = \frac{d}{dt} \left\langle \Omega^{f_0,V} \circ S^t \right\rangle_0
    = 2K^2 \left\langle R^2(S^t \theta ) \sum_{j=1}^N \sin^2 \Big( S^t \theta_j - \Phi\left( S^t \theta \right) \Big) \right\rangle_0\,.
    \label{eq:DOmega}
\end{equation}
Thus, at $t=0$, the integrand of \eqref{eq:DOmega} reads
\begin{align}
    &R^2( \theta) \sum_{j=1} \sin^2(\Phi  -  \theta_j)
    = \frac{1}{N^2} \sum_{j=1}^N \left( \sum_{l=1}^N \sin(\theta_l - \theta_j) \right)^2  \label{eq:IDo}\\
    &\qquad \qquad = \frac{1}{N^2} \sum_{j=1}^N \left[  \sum_{l=1}^N \sin^2(\theta_j - \theta_l)  +   \sum_{l=1}^N \sum_{\substack{k=1 \\ k\neq l}}^N \sin(\theta_l -\theta_j )\sin(\theta_k - \theta_j)  \right]\,. \nonumber 
\end{align}
Furthermore, we have:
\begin{multline}
    \int_{0}^{2\pi} \int_0^{2\pi} \sin( \theta_l - \theta_j) \sin(\theta_k-\theta_j)d\theta_l d\theta_k \\= \int_{0}^{2\pi} \sin(\theta_l-\theta_j) d\theta_l \int_{0}^{2\pi} \sin(\theta_k - \theta_j) d\theta_k =0\,. \label{eq:Int2} 
\end{multline}
Therefore, considering \eqref{eq:IDo} and \eqref{eq:Int2} over \eqref{eq:DOmega} at time $t=0$ we have:
\begin{align*}
    \left. \frac{d}{dt}\langle \Omega^{f_0,V} \rangle_t \right|_{t=0} &= 2K^2 \int_{\mathcal{M}} R^2(\theta) \sum_{j=1}^N \sin^2(\Phi-\theta_j) f_0(\theta) d\theta \\
    &= 2\frac{K^2}{N^2} \frac{1}{(2\pi)^N} \int_{\mathcal{M}} \sum_{j=1}^N \sum_{l=1}^N \sin^2(\theta_j-\theta_l) d\theta \\
    &= 2\frac{K^2}{(2\pi)^2} \frac{N-1}{N} \int_{0}^{2\pi} \int_{0}^{2\pi} \sin^2(\theta_1 - \theta_2) d \theta_1 d \theta_2 \\
    &= K^2 \frac{N-1}{N}\,.
\end{align*}
This completes the proof of \eqref{DOMdt}.
\end{proof}




\Section{Comparison with linear response}
\label{sec:sec4}
 
In this Section we compare the foregoing exact response formalism with the standard linear response \cite{EMnoneq}. Consider a perturbed vector field $V_{\eps}$, defined as
\begin{equation}
    V_{\eps}(\theta)=V_0(\theta)+\eps V_p(\theta)
\end{equation}
where the parameter $\eps$ expresses the strength of the perturbation. 
Following Section \ref{sec:sec3}, we identify $\eps$ with $K$, and define:
\begin{eqnarray}
V_0(\theta)&=&\omega \label{def1}\\
V_{p,j}(\theta)&=& R \sin(\Phi-\theta_j) \, , \quad j=1,...,N \, .
\label{def2}
\end{eqnarray}
Correspondingly, we denote by $S_{\eps}^t$ and $S_{0}^t$ the perturbed and unperturbed flows, respectively. From Eq. (\ref{Omega}), we obtain:
\begin{equation}
    \Omega^{f_0,V_{\eps}}=\Omega^{f_0,V_{0}}+\eps \, \Omega^{f_0,V_p}=\eps \, \Omega^{f_0,V_p} \label{omegaGK} \, .
\end{equation}
In particular, we have:
\begin{equation}
\Omega^{f_0,V_p} =
\frac{1}{N}\sum_{i,j=1}^N \cos{(\theta_j-\theta_i})-1  \, .\label{omp}  
\end{equation}
The last equality in Eq.\eqref{omegaGK} derives from the fact that $\Omega^{f_0,V_0}\equiv 0$ if, as assumed, $f_0$ is invariant under the unperturbed dynamics, cf.\ Eq.\eqref{invf0}.
We may then write the {\em exact} response Eq.\eqref{response} as:
\begin{equation}
\langle \mathcal{O}\rangle_{t,\eps}= \langle \mathcal{O}\rangle_0+\eps\int_0^t \langle \left(\mathcal{O}\circ S_{\eps}^\tau \right)\ 
\Omega^{f_0,V_p}\rangle_0 \ d\tau \label{responseGK}
\end{equation}
where $\mathcal{O}\circ S_{\eps}^t$ denotes the observable $\mathcal{O}$ composed with the perturbed flow. Because this formula is exact, the parameter $\eps$ in it does not need to be small, and it appears both as a factor multiplying the integral and as a subscript indicating the perturbed flow $S_\eps^t$.
Next, using Eq. \eqref{ft}, we can write
\begin{equation}
f_{t}(\theta)=\exp\left\{\eps \int_{-t}^0 
\Omega^{f_0,V_p}(S_{\eps}^\tau\theta) \ d\tau\right\}f_0(\theta) \label{fteps}
\end{equation}
which can be expanded about $\eps=0$, and truncated to first order, to obtain the linear approximation of the evolving probability density: 
\begin{eqnarray}
\bar{f}_{t}(\theta;\eps)&=& f_0(\theta) \left( 1 + \eps \left. 
{\frac d { d \eps}}
\exp\left\{\eps \int_{-t}^0 \Omega^{f_0,V_p}(S_{\eps}^\tau\theta) \ d\tau\right\} \right|_{\eps = 0}
\right) \\
&=&
f_0(\theta) \left( 1 + \eps  \int_{-t}^0 \Omega^{f_0,V_p}(S_{0}^\tau\theta) \ d\tau \right) \nonumber\\
&=& f_0(\theta) \left( 1 +\eps  \int_{0}^t \Omega^{f_0,V_p}(S_{0}^{-\tau}\theta) \ d\tau \right) \, .
\label{fteps2}
\end{eqnarray}
Note that
the expansion in the variable $\eps$ of the exponential in Eq.\eqref{fteps}, requires computing the derivatives with respect to $\eps$ of the time integral in it. This, in turn, requires the derivatives of the Dissipation Function $\Omega^{f_0,V_p}(S_{\eps}^\tau\theta)$, and of the evolved trajectory points $S^\tau_\eps \theta$. Because both the Dissipation Function and the dynamics are smooth on a compact manifold, their derivatives are bounded, and their integral up to any time $t$ computed at $\eps=0$ is also bounded. Multiplied by $\eps$, this integral gives a vanishing contribution to the first derivative of the exponential in Eq.\eqref{fteps}. There only remain the exponential and the integral computed at $\eps=0$, multiplied by the increment $\eps$, which is the brackets in Eq.\eqref{fteps2}.
We then define:
\begin{equation}
\overbar{\langle \mathcal{O}\rangle}_{t,\eps}= \int_{\mathcal{M}} \ \mathcal{O}(\theta)\bar{f}_{t}(\theta;\eps)\ d\theta =
\langle \mathcal{O}\rangle_0+\eps\int_0^t 
\left\langle \mathcal{O}\ \left(\Omega^{f_0,V_p} \circ S_{0}^{-\tau}\right)\right\rangle_0 \ d\tau
\, \label{defGK}
\end{equation}
which is the linear response result.
At the same time, the invariance of the correlation function under time translations of the unperturbed dynamics, which is proven in \ref{app:appB}, yields:
\begin{equation}
\overbar{\langle \mathcal{O}\rangle}_{t,\eps} = \left\langle \mathcal{O} \right\rangle_0 + \eps\int_0^t 
\left\langle \left(\mathcal{O}\circ S_{0}^{\tau}\right)\ \Omega^{f_0,V_p}\right\rangle_0 \ d\tau \, . \label{responseGK2}
\end{equation}
It is interesting to note that, unlike the Green-Kubo formulae, which are obtained from small Hamiltonian perturbations, 
here the perturbation is not Ha\-mil\-tonian. Therefore, we may call \eqref{responseGK2} a \textit{generalized} GK formula. It is worth comparing it with
the exact response formula \eqref{responseGK}, as follows:
\begin{equation}
\langle \mathcal{O}\rangle_{t,\eps}-\overbar{\langle \mathcal{O}\rangle}_{t,\eps}= 
\eps \int_0^t 
\left\langle \Big[ \left(\mathcal{O}\circ S_{\eps}^\tau \right)-\left(\mathcal{O}\circ S_{0}^\tau \right) \Big]  \Omega^{f_0,V_p} \right\rangle_0 \ d\tau \label{responseGK4}
\end{equation}
which shows that the two formulae tend to be the same, in the small $\eps$ limit, as expected. Thanks to the use of the Dissipation Function, their difference lies only in the use of the perturbed rather than the unperturbed flow inside $\mathcal O$.

Let us dwell on the response of two relevant observables, in the case in which $V_0=\omega=0$, hence $S_0^t$ is the identity operator, Id. 
First, taking 
$\mathcal{O}=\Omega^{f_0,V_{\eps}}=\eps \, \Omega^{f_0,V_p}$, we find
\begin{eqnarray}
\langle \Omega^{f_0,V_{\eps}}\rangle_{t,\eps}-\overbar{\langle \Omega^{f_0,V_{\eps}}\rangle}_{t,\eps}&=&\int_0^t \left\langle \left[\left(\Omega^{f_0,V_{\eps}}\circ S_{\eps}^\tau \right)-\left(\Omega^{f_0,V_{\eps}}\circ S_{0}^\tau \right)\right]\ \Omega^{f_0,V_{\eps}} \right\rangle_0 \ d\tau  \nonumber\\
&&\hskip -40pt = \int_0^t \left[ \left\langle \left(\Omega^{f_0,V_{\eps}}\circ S_{\eps}^\tau \right) \Omega^{f_0,V_{\eps}}\right\rangle_0 -\left\langle \left(\Omega^{f_0,V_{\eps}} \right)^2\right\rangle_0 \right] \ d\tau 
\label{responseGK5}
\end{eqnarray}
where we used the identity 
$\left( \Omega^{f_0,V_{\eps}}\circ S_{0}^\tau \right) = \Omega^{f_0,V_{\eps}}$, 
which derives from the fact that $S_0^t=$Id, and which yields, cf.\ Eq.\eqref{DOMdt}:
\begin{equation}
\left\langle \left(\Omega^{f_0,V_{\eps}} \right)^2\right\rangle_0=\eps^2 \frac{N-1}{N} \, . \label{form1} 
\end{equation} 
For $N=2$, we can also use the explicit expression \eqref{omom} for the autocorrelation function:
\begin{equation}
    \left\langle \left(\Omega^{f_0,V_{\eps}}\circ S_{\eps}^\tau \right) \Omega^{f_0,V_{\eps}}\right\rangle_0=\frac{\eps^2}{1+\cosh{(\eps \tau)}} \label{form2}
\end{equation}
which leads to:
\begin{equation}
\langle \Omega^{f_0,V_{\eps}}\rangle_{t,\eps}=\eps \tanh{\left(\frac{\eps t}{2}\right)} \, , \quad
\mbox{and} \quad
\overbar{\langle  \Omega^{f_0,V_{\eps}}\rangle}_{t,\eps}= \frac{\eps^2 t}{2} \label{form4}
\end{equation}
so that 
\begin{equation}
\langle \Omega^{f_0,V_{\eps}}\rangle_{t,\eps}=\overbar{\langle \Omega^{f_0,V_{\eps}}\rangle}_{t,\eps}+o(\eps^2) t \, .\label{lr}
\end{equation}
In other words, for any $\eps > 0$, the difference of the two responses is small at small times, but it diverges linearly as time passes.

As a second instance, let us take $\mathcal{O}=\psi=\theta_1-\theta_2$. From \eqref{omega} and \eqref{R2g} we have:
\begin{equation}
    \Omega^{f_0,V_{\eps}}=2 \eps R^2(\psi)-\eps=\frac{2 \eps}{\tan^2\left(\frac{\psi}{2}\right)+1}-\eps=\eps \cos(\psi) \, .\label{form5}
\end{equation}
Moreover, Eq.\eqref{g} yields:
\begin{equation}
     (\psi\circ S_{\eps}^{t})=2 \arctan{\left[\tan\left(\frac{\psi}{2}\right)e^{-\eps t}\right]} \label{form6}
\end{equation}
and we can write:
\begin{eqnarray}
  \langle \psi\rangle_{t,\eps}-\overbar{\langle \psi\rangle}_{t,\eps} &=& \int_0^t \left[ \left\langle \left(\psi\circ S_{\eps}^\tau \right)\Omega^{f_0,V_{\eps}}\right\rangle_0 -\left\langle \left(\psi\circ S_{0}^\tau \right) \Omega^{f_0,V_{\eps}}\right\rangle_0 \right] \ d\tau \nonumber\\
  &=& \int_0^t \left[ \left\langle \left(\psi\circ S_{\eps}^\tau \right) \Omega^{f_0,V_{\eps}}\right\rangle_0 -\left\langle \psi \  \Omega^{f_0,V_{\eps}}\right\rangle_0 \right] \ d\tau \label{form7}
\end{eqnarray}
where we used $S_0^t=$Id, which implies $\left(\psi\circ S_{0}^\tau \right)\equiv\psi$.
Therefore, using \eqref{form5} and \eqref{form6} in \eqref{form7}, we obtain:
\begin{eqnarray}
    &&\langle \psi\rangle_{t,\eps}-\overbar{\langle \psi\rangle}_{t,\eps} \nonumber \\
    && \qquad = \frac{1}{(2\pi)^2}
    \int_0^t \int_{\mathcal{M}} 2 \arctan{\left[\tan\left(\frac{\theta_1-\theta_2}{2}\right)e^{-\eps \tau}\right]}  \cos (\theta_1 - \theta_2) d\theta d \tau \nonumber\\
    && \qquad  \quad  - \frac{1}{(2\pi)^2} \int_0^t \int_{\mathcal{M}} \left( \theta_1-\theta_2\right)  \cos (\theta_1 - \theta_2) d\theta d\tau = 0 \, .
    \label{form8}
\end{eqnarray}
The last equality follows from the fact that the integrands in Eq. \eqref{form8} are odd continuous and periodic functions, that are integrated over a whole period, so that one actually obtains:
\begin{equation}\label{form9}
\langle \psi\rangle_{t,\eps}=\overbar{\langle \psi\rangle}_{t,\eps}\equiv 0  \quad , \quad \forall\, t>0\,.  
\end{equation}
Clearly, there are observables for which the difference of responses is irrelevant,
since they do not evolve in time,
and others for which the difference is substantial, even under small perturbations. In any event, the exact response characterizes the synchronization transition, while the linear response does not.

\Section{Concluding remarks}
\label{sec:sec5}

We investigated the Kuramoto dynamics for identical oscillators through the statistical mechanics framework of response theory. As a reference
(unperturbed) dynamics we took a system of uncoupled oscillators, with statistical properties given by a factorized $N$-body distribution with
uniform marginal densities. Next, we interpreted the classical Kuramoto mean-field dynamics as a perturbation of the reference one. For any finite
number $N$ of oscillators, we then derived an exact response formula whose validity holds for arbitrarily large perturbations, and we computed, both
analytically and numerically, the asymptotic value of the Dissipation Function. The latter is indeed the main ingredient of the exact response
theory, that has been developed and is well established within the framework of nonequilibrium molecular dynamics \cite{Caruso20,dalron,ESW,Typic,Jepps16}. 
Explicit analytical results are given for $N=2$. We also investigated the two-time autocorrelation function of the Dissipation Function, and
highlighted its non-monotonic behavior for sufficiently large $N$. Finally, we compared the exact response formalism with the linear response
regime. We found that the two responses differ substantially, even for very small perturbations, and that only the exact response describes the transition to synchronized states.

This indicates that the exact response theory, which by definition must be capable of describing even systems undergoing non-equilibrium 
phase transitions, may actually be used in practice. Synchronization phenomena, which are ubiquitous in nature, are indeed of that kind.

\section*{Acknowledgements}

\par\noindent
L.\ R.\ acknowledges partial support from Ministero dell’Istruzione e Ministero dell’Uni\-ver\-si\-tà 
e della Ricerca Grant Dipartimenti di Eccellenza 2018-2022
\par\noindent
(E11G18000350001).

\appendix

\Section{Unstable fixed points for the identical case} 
\label{app:appA}
In this section we show explicitly the existence of unstable points in any neighborhood of a fixed point of $(N-1,1)$ type.

\begin{proposition}
Let $\bar \theta$ be the stationary type $(N-1,1)$ solution set in \eqref{eq:SType2} and $\delta>0$. If $\theta=(\theta_1,\ldots,\theta_N)$ satisfy
\begin{gather} \label{eq:AssumpDelta1}
    \left| \theta_j - \varphi^* \right| \leq \delta^2, \quad  j=\left[1,\ldots,N-1 \right] \\ 
    \theta_N= \varphi^*+\pi+\delta 
\label{eq:AssumpDelta2}
\end{gather}
then there exists a $\delta_0$ such that for any $0<|\delta|< \delta_0 
$ one has:
\begin{equation}\label{eq:R-larger-than}
    R^2(\theta)> \left( \frac{N-2}{N} \right)^2
\end{equation}
and therefore $R(S^t\theta)\to 1$ as $t\to\infty$.
\begin{proof}
From the equation \eqref{R2} we have that
\begin{align*}
    R^2(\theta)- \left( \frac{N-2
    }{N} \right)^2 &= \frac{1}{N^2} \sum_{i,j=1}^N \cos(\theta_i-\theta_j) - \sum_{i,j=1}^{N-1} 1 - 2 \sum_{i=1}^{N-1} 1 +1 \\
    &= \frac{1}{N^2} \left[ \underbrace{\sum_{i,j=1}^{N-1} \left[ \cos(\theta_i-\theta_j)-1 \right]}_{I_1} + \underbrace{2 \sum_{j=1}^{N-1} \left[ \cos(\theta_N - \theta_j) +1 \right] }_{I_2} \right] \\
    &= \frac{1}{N^2}(I_1 + I_2) \, .
\end{align*}
Next, we estimate the lower bounds of $I_1$ and $I_2$. We use the elementary inequality $\frac{x^2}{4}\leq 1- \cos(x) \leq \frac{x^2}{2}$, which is valid for $|x|\le x_0$ where $x_0 \in (0,\frac{5\pi}{6})$.
Then, by using \eqref{eq:AssumpDelta1}, for $I_1$ we get
\begin{equation}
    \begin{split}
    I_1&= \sum_{i,j=1}^{N-1} \cos(\theta_i-\theta_j)-1 \geq  - \frac{1}{2} \sum_{i,j=1}^{N-1}( \theta_i - \theta_j)^2 \\
    &= -\frac{1}{2} \sum_{i,j=1}^N \left[ (\theta_i - \varphi^*) + (\varphi^* -\theta_j) \right]^2 \geq  - 2 \delta^4 (N-1)^2
    \end{split}
 \label{eq:I1}
\end{equation}
if $2\delta^2 \leq x_0$. On the other hand,  
for $I_2$ we first observe that for $1\le j \le N-1$
\begin{align*}
    |\theta_N - \pi - \theta_j| &\le  |\theta_N - \pi - \varphi^*| + |\varphi^*- \theta_j|   \\
    &\le \delta^2 + |\delta| \\
    &\le 2 |\delta|
\end{align*}
if we take $|\delta|\le 1$.

Then we can use the inequality \eqref{eq:AssumpDelta1} to obtain that
\begin{align*}
    \cos(\theta_N - \theta_j) +1 & =1- \cos(\theta_N - \pi - \theta_j) \\
    &\ge \frac 1 4 (\delta^2 + |\delta|)^2 = \frac{1}{4} \delta^2 (1+|\delta|)^2\\
    &\geq \frac{\delta^2}{4}
\end{align*}
where we consider that $\delta^2+|\delta|\le  2 |\delta|\le x_0$. Therefore
\begin{equation}
    I_2 \geq (N-1)  \frac{\delta^2}{2}
    \label{eq:I2}
\end{equation}
and it follows from the equations \eqref{eq:I1} and \eqref{eq:I2} that
\begin{align*}
    R^2(\theta)- \left(\frac{N-2}{N} \right)^2 &\geq \frac 1 {N^2}\left\{ \frac{N-1}{2}  \delta^2 - 2 \delta^4 (N-1)^2\right\} \\
    &=  \delta^2 \frac{N-1}{N^2} \left[\frac{1}{2} - 2\delta^2 (N-1) \right]\\
    & > 0 
\end{align*}
for $\delta^2 < \frac{1}{2(N-1)}$. In summary, 
if we choose $\delta_0=\min\{1, \frac{x_0}2, \frac 1 {2\sqrt{N-1}}\}$, then \eqref{eq:R-larger-than} holds. 

Finally, to prove that $R(S^t\theta)\to 1$ as $t\to+\infty$, we use the fact that the function 
$t \mapsto R(S^t\theta)$ is not decreasing and converges to a value $(N-2k)/N >0$ for some integer $k\ge 0$.

By \eqref{eq:R-larger-than} and the monotonicity we deduce that
$R(S^t\theta)>  \frac{N-2k}{N}$ for all $k\ge 1$ and all $t\ge 0$, and therefore we conclude that, necessarily, the limiting value has $k=0$. The proof is complete.  
\end{proof}
\end{proposition}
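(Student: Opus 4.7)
The plan is to compare $R^2(\theta)$ to $R^2(\bar\theta)=\bigl(\frac{N-2}{N}\bigr)^2$ directly via the identity $R^2(\theta)=\frac{1}{N^2}\sum_{i,j=1}^N\cos(\theta_i-\theta_j)$ from \eqref{R2}. Splitting the double sum according to whether each index belongs to $\{1,\ldots,N-1\}$ or equals $N$, and subtracting the corresponding values at $\bar\theta$ (where $\cos(\bar\theta_i-\bar\theta_j)=1$ on the first block and $-1$ on the mixed block), one obtains
\[
N^{2}\!\left[R^{2}(\theta)-\Bigl(\tfrac{N-2}{N}\Bigr)^{2}\right]=I_{1}+I_{2},
\]
with $I_{1}=\sum_{i,j=1}^{N-1}[\cos(\theta_i-\theta_j)-1]\le 0$ and $I_{2}=2\sum_{j=1}^{N-1}[\cos(\theta_N-\theta_j)+1]\ge 0$. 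The strategy is to show that $I_{2}$ is of order $\delta^{2}$ while $|I_{1}|$ is of order $\delta^{4}$, so that $I_{1}+I_{2}>0$ once $\delta$ is small enough.

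For $I_{1}$, assumption \eqref{eq:AssumpDelta1} yields $|\theta_i-\theta_j|\le 2\delta^{2}$, and the elementary inequality $1-\cos(x)\le x^{2}/2$ gives $|I_{1}|\le 2(N-1)^{2}\delta^{4}$. For $I_{2}$, I rewrite $\cos(\theta_N-\theta_j)+1=1-\cos(\theta_N-\pi-\theta_j)=1-\cos\bigl(\delta-(\theta_j-\varphi^{*})\bigr)$ using \eqref{eq:AssumpDelta2}. The crucial step is a \emph{lower} bound on the argument: by the reverse triangle inequality, $|\delta-(\theta_j-\varphi^{*})|\ge|\delta|-\delta^{2}=|\delta|(1-|\delta|)>0$ for $|\delta|<1$. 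Combining this with the lower bound $1-\cos(x)\ge x^{2}/4$, valid on a neighborhood of $0$ (say $|x|\le x_{0}<5\pi/6$), one gets $I_{2}\ge\tfrac{1}{2}(N-1)\delta^{2}(1-|\delta|)^{2}$. A quick algebraic comparison between the orders $\delta^2$ and $\delta^4$ shows that choosing $\delta_{0}$ of the form $\min\{1,\,x_{0}/2,\,c/\sqrt{N-1}\}$ for an absolute constant $c$ makes the positive contribution $I_{2}$ strictly dominate $|I_{1}|$, delivering the required strict inequality $R^{2}(\theta)>\bigl(\frac{N-2}{N}\bigr)^{2}$.

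For the long-time statement $R(S^{t}\theta)\to 1$, I invoke two earlier inputs. The identity \eqref{eq:deriv-R2} used in Lemma~\ref{lem:5} shows that $t\mapsto R^{2}(S^{t}\theta)$ is nondecreasing, hence converges as $t\to+\infty$ to some limit $R^{\infty}\in[R(\theta),1]$. By Theorem~2.4 of \cite{BCM-CMS2015}, the trajectory approaches either an $(N,0)$ configuration (yielding $R^{\infty}=1$) or an $(N-1,1)$ configuration (yielding $R^{\infty}=(N-2)/N$). Monotonicity together with the strict inequality $R(\theta)>(N-2)/N$ just established rules out the latter possibility, so $R^{\infty}=1$. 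The main obstacle is the estimate on $I_{2}$: one must \emph{lower}-bound $|\theta_N-\pi-\theta_{j}|$ rather than upper-bound it, because we need a lower bound on $1-\cos$; the near-cancellation between $\delta$ and $\theta_{j}-\varphi^{*}$ is the one delicate point, and it is resolved precisely because the gap $|\delta|-\delta^{2}$ is strictly positive for $|\delta|<1$. Everything else is routine trigonometric bookkeeping.
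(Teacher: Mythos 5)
Your proof follows the paper's argument essentially verbatim: the same decomposition $N^2\bigl[R^2(\theta)-\bigl(\tfrac{N-2}{N}\bigr)^2\bigr]=I_1+I_2$, the same elementary bounds $\tfrac{x^2}{4}\le 1-\cos x\le \tfrac{x^2}{2}$ with the same choice of $\delta_0$ up to constants, and the same monotonicity-plus-strict-inequality argument for $R(S^t\theta)\to 1$. The one point where you differ is in your favor: for $I_2$ you correctly \emph{lower}-bound the argument by $|\delta|-\delta^2=|\delta|(1-|\delta|)$ via the reverse triangle inequality before applying $1-\cos x\ge \tfrac{x^2}{4}$, whereas the paper's displayed chain inserts the upper bound $\delta^2+|\delta|$ at that step, which is not justified as written (the conclusion survives, with your slightly smaller constant $(1-|\delta|)^2$ replacing their $(1+|\delta|)^2$ and the final comparison of orders $\delta^2$ versus $\delta^4$ unaffected).
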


\Section{Stationary correlation functions}
\label{app:appB}
Given a vector field $V_0$, let $f_0$ be an invariant probability density under the flow $S^t_0$ generated by $V_0$. With the notation set by Eq.\eqref{notat}, let $\Lambda_{0,t}^{0}$ be the time integral over a trajectory segment, from time $0$ to time $t$, of the phase space volume variation rate $\Lambda^{0}$, which is the divergence of the vector field $V_0$.
Two-time correlation functions between two generic observables $\mathcal{A}, \mathcal{B}: \mathcal{M}\rightarrow \mathbb{R}$, evaluated with the density $f_0$, are invariant under the time translations determined by $S^t_0$.
This can be shown as follows. 
First we note that, proceeding as in Eq. \eqref{omevol}, one finds
\begin{eqnarray}
\Omega_{-t,0}^{f_s,V_0}&=&
\int_{-t}^0 
\Omega^{f_s,V_0}(S_0^{\tau}\theta) d\tau=
-\Lambda_{-t,0}^{0}-\int_{-t}^0 \frac{d}{d\tau}\left(\log f_s(S_0^{\tau}\theta)\right) d\tau\nonumber\\
&=& -\Lambda_{-t,0}^{0}-\log \frac{f_s(\theta)}{f_s(S_0^{-t}\theta)} \, . \label{app1}
\end{eqnarray}
Upon setting $s=0$ in \eqref{app1} and using Eq.\eqref{steady}, we find 
$\left(\Omega^{f_0,V_0}\right)_{-t,0}\equiv 0$, from which we obtain the following useful relation 
\begin{equation}
    f_0(\theta)=\exp\Big\{-\Lambda_{-t,0}^{0}(\theta)\Big\}f_0(S_0^{-t}\theta) \label{useful}
\end{equation}
where the exponential term is related to the Jacobian determinant of the dynamics as  \cite{Jepps16}:
\begin{equation}
   \left|\frac{\partial \left(S_0^{-t}\theta\right)}{\partial \theta}\right| 
      =\exp\Big\{-\Lambda_{-t,0}^{0}(\theta)\Big\} \, . \label{jac2}
\end{equation}
Let us look, next, at time correlation functions of the form 
$$
\langle \left(\mathcal{A}\circ S_{0}^{s+\tau}\right)\ \left(\mathcal{B}\circ S_{0}^{t}\right)\rangle_0=\int_{\mathcal{M}} \mathcal{A}(S_{0}^{s+\tau}\theta)\ \mathcal{B} (S_{0}^{t}\theta) f_0(\theta) d\theta
$$
for any $s,t,\tau \in \mathbb{R}$.
By a change of variables, one finds 
\begin{eqnarray}
\langle \left(\mathcal{A}\circ S_{0}^{s+\tau}\right)\ \left(\mathcal{B}\circ S_{0}^{t}\right)\rangle_0
&=&\int_{\mathcal{M}} \mathcal{A}(S_{0}^{s}\theta)\ \mathcal{B} (S_{0}^{t-\tau}\theta) f_0(S_{0}^{-\tau}\theta) d\left(S_{0}^{-\tau}\theta\right)\nonumber\\
&=& \int_{\mathcal{M}} \mathcal{A}(S_{0}^{s}\theta)\ \mathcal{B} (S_{0}^{t-\tau}\theta) f_0(S_{0}^{-\tau}\theta) \left|\frac{\partial \left(S_{0}^{-\tau}\theta\right)}{\partial \theta}\right| d\theta\nonumber\\
&=& \int_{\mathcal{M}} \mathcal{A}(S_{0}^{s}\theta)\ \mathcal{B} (S_{0}^{t-\tau}\theta)  \exp\left\lbrace-\Lambda_{-\tau,0}^{0}\right\rbrace f_0(S_{0}^{-\tau}\theta) d\theta\nonumber\\
&=& \int_{\mathcal{M}} \mathcal{A}(S_{0}^{s}\theta)\ \mathcal{B} (S_{0}^{t-\tau}\theta) f_0(\theta)  d\theta \nonumber\\
&=& \langle \left(\mathcal{A}\circ S_{0}^{s}\right)\ \left(\mathcal{B}\circ S_{0}^{t-\tau}\right) \rangle_0 \label{invar}
\end{eqnarray}
where we used \eqref{jac2} and, in the last line, the formula \eqref{useful}.


\bibliographystyle{plain}
\bibliography{biblio}

\begin{thebibliography}{10}

\bibitem{RevModPhys.77.137}
J.~Acebr\'on, L.~Bonilla, C.~P\'erez, F.~Ritort, and R.~Spigler.
\newblock {The Kuramoto model: A simple paradigm for synchronization
  phenomena}.
\newblock {\em Rev. Mod. Phys.}, 77:137--185, 2005.

\bibitem{Agar}
G.~S. Agarwal.
\newblock {Fluctuation-Dissipation Theorems for Systems in Non-Thermal
  Equilibrium and Applications}.
\newblock {\em Z. Physik}, 252:25--38, 1972.

\bibitem{ARENAS200893}
A.~Arenas, A.~Díaz-Guilera, J.~Kurths, Y.~Moreno, and C.~Zhou.
\newblock Synchronization in complex networks.
\newblock {\em Physics Reports}, 469(3):93--153, 2008.

\bibitem{Maes}
M.~Baiesi, C.~Maes, and B.~Wynants.
\newblock {Nonequilibrium Linear Response for Markov Dynamics, I: Jump
  Processes and Overdamped Diffusions}.
\newblock {\em J. Stat. Phys.}, 137(5):1094, 2009.

\bibitem{BALMFORTH200021}
N.~Balmforth and R.~Sassi.
\newblock A shocking display of synchrony.
\newblock {\em Physica D: Nonlinear Phenomena}, 143(1):21--55, 2000.

\bibitem{BCM-CMS2015}
D.~Benedetto, E.~Caglioti, and U.~Montemagno.
\newblock On the complete phase synchronization for the {K}uramoto model in the
  mean-field limit.
\newblock {\em Commun. Math. Sci.}, 13(7):1775--1786, 2015.

\bibitem{BDL}
T.~Bodineau, B.~Derrida, and J.~L. Lebowitz.
\newblock A diffusive system driven by a battery or by a smoothly varying
  field.
\newblock {\em J. Stat. Phys.}, 140:648--675, 2010.

\bibitem{Caruso20}
S.~Caruso, C.~Giberti, and L.~Rondoni.
\newblock {Dissipation Function: Nonequilibrium Physics and Dynamical Systems}.
\newblock {\em Entropy}, \textbf{22}:835, 2020.

\bibitem{CHOI2012735}
Y.~Choi, S.~Ha, S.~Jung, and Y.~Kim.
\newblock Asymptotic formation and orbital stability of phase-locked states for
  the {K}uramoto model.
\newblock {\em Physica D: Nonlinear Phenomena}, 241(7):735--754, 2012.

\bibitem{CL14}
M.~Colangeli and V.~Lucarini.
\newblock Elements of a unified framework for response formulae.
\newblock {\em J. Stat. Mech. Theory Exp.}, 2014:P01002, 2014.

\bibitem{CMW11}
M.~Colangeli, C.~Maes, and B.~Wynants.
\newblock A meaningful expansion around detailed balance.
\newblock {\em J. Phys. A}, 44(9):095001, 13, 2011.

\bibitem{CR12}
M.~Colangeli and L.~Rondoni.
\newblock Equilibrium, fluctuation relations and transport for irreversible
  deterministic dynamics.
\newblock {\em Physica D: Nonlinear Phenomena}, \textbf{241}(6):681--691, 2012.

\bibitem{CRV12}
M.~Colangeli, L.~Rondoni, and A.~Vulpiani.
\newblock {Fluctuation-dissipation relation for chaotic non-Ha\-mil\-to\-nian
  systems}.
\newblock {\em J. Stat. Mech. Theory Exp.}, 2012:L04002, 2012.

\bibitem{dalron}
S.~Dal~Cengio and L.~Rondoni.
\newblock Broken versus non-broken time reversal symmetry: irreversibility and
  response.
\newblock {\em Symmetry}, 8(8):Art. 73, 20, 2016.

\bibitem{Derrida}
B.~Derrida.
\newblock Non-equilibrium steady states: fluctuations and large deviations of
  the density and of the current.
\newblock {\em J. Stat. Mech. Theory Exp.}, 2007(7):P07023, 45, 2007.

\bibitem{DF-2018}
H.~Dietert and B.~Fernandez.
\newblock The mathematics of asymptotic stability in the {K}uramoto model.
\newblock {\em Proc. R. Soc. A.}, 474(2220):20180467, 20, 2018.

\bibitem{DX-CMS2013}
J.-G. Dong and X.~Xue.
\newblock Synchronization analysis of {K}uramoto oscillators.
\newblock {\em Commun. Math. Sci.}, 11(2):465--480, 2013.

\bibitem{DORFLER20141539}
F.~Dörfler and F.~Bullo.
\newblock {Synchronization in complex networks of phase oscillators: A survey}.
\newblock {\em Automatica}, 50(6):1539--1564, 2014.

\bibitem{ECM}
D.J. Evans, E.G.D. Cohen, and G.P. Morriss.
\newblock Probability of second law violations in shearing steady flows.
\newblock {\em Phys. Rev. Lett.}, 71:2401, 1993.

\bibitem{EMnoneq}
D.J. Evans and G.~Morriss.
\newblock {\em Statistical Mechanics of Nonequilibrium Liquids}.
\newblock Cambridge University Press, 2008.

\bibitem{ES94}
D.J. Evans and D.J. Searles.
\newblock Equilibrium microstates which generate second law violating steady
  states.
\newblock {\em Phys. Rev. E}, 50:1645--1648, 1994.

\bibitem{ESAdvPhys}
D.J. Evans and D.J. Searles.
\newblock {The Fluctuation Theorem}.
\newblock {\em Advances in Physics}, 51(7):1529--1585, 2002.

\bibitem{ESR2005}
D.J. Evans, D.J. Searles, and L.~Rondoni.
\newblock Application of the {G}allavotti--{C}ohen fluctuation relation to
  thermostated steady states near equilibrium.
\newblock {\em Phys. Rev. E}, 71:056120, 2005.

\bibitem{ESW}
D.J. Evans, D.J. Searles, and S.R. Williams.
\newblock On the fluctuation theorem for the dissipation function and its
  connection with response theory.
\newblock {\em J. Chem. Phys.}, 128(014504), 2008.

\bibitem{Typic}
D.J. Evans, S.R. Williams, D.J. Searles, and L~Rondoni.
\newblock On typicality in nonequilibrium steady states.
\newblock {\em J. Chem. Phys.}, 128(014504), 2016.

\bibitem{fell2011role}
J.~Fell and N.~Axmacher.
\newblock The role of phase synchronization in memory processes.
\newblock {\em Nat. Rev. Neurosci.}, 12(2):105--118, 2011.

\bibitem{GC}
G.~Gallavotti and E.G.D. Cohen.
\newblock {Dynamical ensembles in stationary states}.
\newblock {\em J. Statist. Phys.}, {\textbf{80}}:931--970, 1995.

\bibitem{Glass2001}
L.~Glass.
\newblock {Synchronization and rhythmic processes in physiology}.
\newblock {\em Nature}, 410(6825):277--284, 2001.

\bibitem{gupta2018statistical}
S.~Gupta, A.~Campa, and S.~Ruffo.
\newblock {\em Statistical physics of synchronization}.
\newblock Springer, 2018.

\bibitem{ha2016collective}
S.~Ha, D.~Ko, J.~Park, and X.~Zhang.
\newblock Collective synchronization of classical and quantum oscillators.
\newblock {\em EMS Surv. Math. Sci.}, 3(2):209--267, 2016.

\bibitem{Jepps16}
O.G. Jepps and L.~Rondoni.
\newblock {A dynamical-systems interpretation of the dissipation function,
  T-mixing and their relation to thermodynamic relaxation}.
\newblock {\em J. Phys. A: Math. Theor.}, \textbf{49}:154002, 2016.

\bibitem{jiruska2013synchronization}
P.~Jiruska, M.~De~Curtis, J.~Jefferys, C.~Schevon, S.~Schiff, and K.~Schindler.
\newblock Synchronization and desynchronization in epilepsy: controversies and
  hypotheses.
\newblock {\em J. Physiol.}, 591(4):787--797, 2013.

\bibitem{Kubo}
R.~Kubo.
\newblock {The fluctuation-dissipation theorem}.
\newblock {\em Rep. Prog. Phys.}, {\textbf{29}}:255--284, 1966.

\bibitem{FirstKura}
Y.~Kuramoto.
\newblock Self-entrainment of a population of coupled non-linear oscillators.
\newblock In {\em International {S}ymposium on {M}athematical {P}roblems in
  {T}heoretical {P}hysics}, pages 420--422. Springer-Verlag, 1975.

\bibitem{kuramoto1984chemical}
Y.~Kuramoto.
\newblock {\em Chemical oscillations, waves, and turbulence}.
\newblock Springer Series in Synergetics. Springer-Verlag, Berlin, 1984.

\bibitem{CL12}
V.~Lucarini and M.~Colangeli.
\newblock {Beyond the linear fluctuation-dissipation theorem: the role of
  causality}.
\newblock {\em J. Stat. Mech. Theory Exp.}, 2012:P05013, 2012.

\bibitem{MPRV}
U.B.M. Marconi, A.~Puglisi, L.~Rondoni, and A.~Vulpiani.
\newblock {Fluctuation–dissipation: Response theory in statistical physics}.
\newblock {\em Physics Reports}, {\textbf{461}}:111--195, 2008.

\bibitem{motter2013spontaneous}
A.~Motter, S.~Myers, M.~Anghel, and T.~Nishikawa.
\newblock Spontaneous synchrony in power-grid networks.
\newblock {\em Nature Physics}, 9(3):191--197, 2013.

\bibitem{perko2013differential}
L.~Perko.
\newblock {\em Differential equations and dynamical systems}.
\newblock Springer-Verlag New York, 2006.

\bibitem{pikovsky2001}
A.~Pikovsky, M.~Rosenblum, and J.~Kurths.
\newblock {\em {Synchronization: A Universal Concept in Nonlinear Sciences}}.
\newblock Cambridge University Press, 2001.

\bibitem{Ruelle}
D.~Ruelle.
\newblock {General linear response formula in statistical mechanics, and the
  fluctuation-dissipation theorem far from equilibrium}.
\newblock {\em Physics Letters A}, {\textbf{245}}:220--224, 1998.

\bibitem{SRE2007}
D.J. Searles, L.~Rondoni, and D.J. Evans.
\newblock The steady state fluctuation relation for the dissipation function.
\newblock {\em J. Stat. Phys.}, 128(6):1337--1363, 2007.

\bibitem{singer1999neuronal}
W.~Singer.
\newblock Neuronal synchrony: a versatile code review for the definition of
  relations.
\newblock {\em Neuron}, 24(24):49--64, 1999.

\bibitem{STROGATZ20001}
S.~Strogatz.
\newblock {From Kuramoto to Crawford: exploring the onset of synchronization in
  populations of coupled oscillators}.
\newblock {\em Physica D: Nonlinear Phenomena}, 143(1):1--20, 2000.

\end{thebibliography}

\end{document}